\documentclass[10pt,conference,romanappendices]{IEEEtran}
\usepackage{bm,cite}
\usepackage{amsmath}
\usepackage{amsthm}
\usepackage{amsbsy}
\usepackage{epsfig}
\usepackage{latexsym}
\usepackage{amssymb}
\usepackage{wasysym}
\usepackage{placeins}

\newtheorem{theorem}{Theorem}
\newtheorem{definition}{Definition}
\newtheorem{proposition}{Proposition}

\DeclareMathAlphabet{\mathbit}{OML}{cmr}{bx}{it}
\DeclareMathAlphabet{\mathsf}{OT1}{cmss}{m}{n}
\DeclareMathAlphabet{\mathTXf}{OT1}{cmss}{bx}{it}

\DeclareMathOperator*{\maximize}{maximize}
\DeclareMathOperator*{\minimize}{minimize}


\newcommand{\norm}[1]{\lVert{#1}\rVert}

\newcommand{\MG}{{\text{M}_{\text{G}}}}

\newcommand{\E}{{\text{E}}}

\newcommand{\trans}{{\text{T}}} 
 
\newcommand{\He}{{{\text{H}}}}



\begin{document} 
\title{Towards Optimal CSI Allocation in Multicell MIMO Channels}
\author{Paul de Kerret and David Gesbert\\Mobile Communications Department, Eurecom\\
 2229 route des Cr\^etes, 06560 Sophia Antipolis, France\\\{dekerret,gesbert\}@eurecom.fr}

\maketitle

\begin{abstract}
In this work\footnote{This work has been performed in the framework of the European research project ARTIST4G, which is partly funded by the European Union under its FP7 ICT Objective 1.1 - The Network of the Future.}, we consider the joint precoding across $K$ transmitters (TXs), sharing the knowledge of the user's data symbols to be transmitted towards $K$ single-antenna receivers (RXs). We consider a distributed channel state information (DCSI) configuration where each TX has its own local estimate of the overall multiuser MIMO channel. The focus of this work is on the optimization of the allocation of the CSI feedback subject to a constraint on the total sharing through the backhaul network. Building upon the Wyner model, we derive a new approach to allocate the CSI feedback while making efficient use of the pathloss structure to reduce the amount of feedback necessary. We show that the proposed CSI allocation achieves good performance with only a number of CSI bits per TX which does not scale with the number of cooperating TXs, thus making the joint transmission from a large number of TXs more practical than previously thought. Indeed, the proposed CSI allocation reduces the cooperation to a \emph{local} scale, which allows also for a reduced allocation of the user's data symbols. We further show that the approach can be extended to a more general class of channel: the exponentially decaying channels, which model accuratly the cooperation of TXs located on a one dimensional space. Finally, we verify by simulations that the proposed CSI allocation leads to very little performance losses.
\end{abstract}  
\IEEEpeerreviewmaketitle
\section{Introduction}

Network or Multicell MIMO methods, whereby multiple interfering transmitters (TXs) share user messages and allow for joint precoding, are currently considered for next generation wireless networks \cite{Gesbert2010}. With perfect message and channel state information (CSI) sharing, the different TXs can be seen as a unique virtual multiple-antenna array serving all receivers (RXs), in a multiple-antenna broadcast channel (BC) fashion. However, the allocation of the data symbols and the CSI to the cooperating TXs impose huge requirements on the architecture.
The common solution is to use disjoint clusters to reduce the amount of data to be shared\cite{Zhang2009,Papadogianni2008}. Yet, clustering limits the performance of the system because of the interference created at the edge of the clusters. In this work work, we focus on the limited sharing of the CSI feedback bits while the sharing of the data symbol is considered in the parallel work\cite{Gangula2011}. We will show later that our approach also leads to a reduction of the user's data sharing needed.

In recent works, adaptive allocation of the CSI feedback bits in a multicell cellular network has been studied\cite{Zhang2010,Ho2011,Bhagavatula2011a,Bhagavatula2011b}. However, we consider here the \emph{joint} precoding from TXs having their own \emph{local} channel estimates. This setting, introduced in \cite{Zakhour2010,dekerret2011_ISIT,dekerret2011_ISIT_arXiv} as the \emph{distributed CSI} (DCSI)-MIMO channel, opens up completely new research problems. In that case, the optimization of the CSI allocation has been discussed in \cite{dekerret2011_ISIT_arXiv} but only in terms of degrees of freedom. 
%

It is very intuitive that, in wireless networks, the precision with which a channel to a given RX should be known at a given TX depends on the distance between the TX and the RX. However, an analytical analysis has never been done and is the focus of this work. Particularly, we aim at answering the fundamental question: \emph{How should the CSI feedback be shared through the network?}

We start by analyzing a very simplified channel model, referred to in the literature as the \emph{Wyner} model, in which the TXs and the RXs are placed on a one-dimensional space (e.g. a line) and receive signals only from a few neighboring TXs. This simplistic model has the advantage of being more tractable while still offering valuable insights on more realistic channels. This model has been introduced in \cite{Wyner1994} and has been very successful since, particularly to model cooperation in wireless networks \cite{Bhagavatula2010,Levy2009,Bang2011,Shamai2011}. The extension of the work to more general channel models is provided in \cite{dekerret2012_ICC_arXiv}.

In this work, building above an asymptotic analysis taking into account the geometry of the network, we derive analytically for the Wyner model a CSI allocation allowing for a large reduction of the CSI required at the TXs at the cost of reduced performance losses. Moreover, our CSI allocation scales linearly in the number of TXs instead of the conventional quadratic scaling for full CSI sharing, while still performing well, and is thus an interesting solution to make large cooperations areas more feasible in practice. 
\section{System Model}
We consider the distributed CSI (DCSI)-MIMO channel, in which $K$~transmitters (TXs) transmit \emph{jointly} using linear precoding to $K$~receivers (RXs) equipped with a single antenna and applying single user decoding. Each TX has the knowledge of the $K$ symbols to transmit to the $K$~RXs due to a pre-existing routing protocol for user-plane data. Besides the knowledge of data symbol, each TX is supposed to acquire, through an unspecified feedback or sharing mechanism, its \emph{own} estimate on the channel vectors to all the users, thus explaining the meaning of the term \emph{distributed}. 

The channel is represented by the channel matrix~$\mathbf{H}\in \mathbb{C}^{K\times K}$ and the transmission is described mathematically as

\begin{equation}
\begin{bmatrix}
y_1\\\vdots\\y_K
\end{bmatrix}
=
\mathbf{H}\bm{x} 
+
\bm{\eta}
=
\begin{bmatrix}
\bm{h}^{\He}_1\bm{x}\\\vdots\\
\bm{h}^{\He}_K\bm{x}
\end{bmatrix}
+
\begin{bmatrix}
\eta_1\\\vdots\\
\eta_K
\end{bmatrix} 
\label{eq:SM_1}
\end{equation}
where $y_i$ is the signal received at the $i$-th RX, $\bm{h}^{\He}_i \in \mathbb{C}^{1\times K}$ the channel to the $i$-th RX, $\bm{\eta}=[\eta_1,\ldots,\eta_K]^{\trans}\in \mathbb{C}^{K\times 1}$ the zero mean unit variance i.i.d. complex Gaussian noise ($\mathcal{CN}(0,1)$). $\bm{x}\in \mathbb{C}^{K\times 1}$ is the transmitted signal obtained from the symbol vector $\bm{s}=[s_1,\ldots,s_K]^{\trans}\in\mathbb{C}^{K\times 1}$ (i.i.d. $\mathcal{CN}(0,1)$) as
\begin{equation}
\bm{x}=\mathbf{T}\bm{s}=
\begin{bmatrix}\bm{t}_1&\ldots& \bm{t}_K\end{bmatrix}
\begin{bmatrix}s_1\\\vdots\\s_K\end{bmatrix} 
\label{eq:SM_2}
\end{equation}
where $\mathbf{T}\in \mathbb{C}^{K\times K}$ is the precoding matrix and $\bm{t}_i \in \mathbb{C}^{K\times 1}$ is the beamforming vector used to transmit $s_i$ to RX~$i$. We consider a per-stream power constraint $\norm{\bm{t}_i}^2=P$. 

Our focus is on the maximization of the sum rate averaged over the fading distribution, where the rate of user~$i$ reads as
\begin{equation}
R_i\triangleq\E_{\mathbf{H}}\left[\log_2\left(1+\frac{|\bm{h_i}^{\He}\bm{t}_i|^2}{1+\sum_{\ell\neq i}|\bm{h_{i}}^{\He}\bm{t}_\ell|^2}\right)\right].
\label{eq:SM_3}
\end{equation}
In this work, we study the high SNR regime such that it is interesting to introduce the \emph{Multiplexing Gain} (also called degree of freedom or prelog factor) at RX~$i$ as
\begin{equation}
\MG_i\triangleq \lim_{P\rightarrow\infty}\frac{R_i}{\log_2(P)}.
\label{eq:SM_3}
\end{equation}
\subsection{Distributed CSI and Distributed Precoding}

In the DCSI-MIMO channel, the $i$-th TX has its own individual estimate of its channel $\bm{h}_i^{\He}$ to RX~$i$ for all $i$, denoted by $\bm{h}^{(j)\He}_i$ and obtained from a \emph{Random Vector Quantization (RVQ)} using~$B^{(j)}_i$ bits. We focus on interference limited wireless networks working at high SNR so that we assume that Zero Forcing (ZF) precoders are used since they converge to the optimal precoder as the SNR increases. We denote by $\bm{t}_i^{(j)}$ the beamforming vector transmitting symbol $s_i$ computed at TX~$j$ and which reads as
\begin{equation}
\forall i\in\{1,\ldots,K\},\;\bm{t}_i^{(j)}\triangleq \sqrt{P}\frac{\left(\mathbf{H}^{(j)}\right)^{-1}\bm{e}_i}{\norm{\left(\mathbf{H}^{(j)}\right)^{-1}\bm{e}_i}} .
\label{eq:SM_4}
\end{equation}
Although a given TX~$j$ may compute the whole precoding matrix $\mathbf{T}^{(j)}$, only the $j$-th row will be used in practice. Finally, the effective precoder is given by
\begin{equation}
\mathbf{T}\triangleq\begin{bmatrix}\bm{t}_1&\ldots&\bm{t}_K\end{bmatrix}\triangleq\begin{bmatrix}T_{11}^{(1)}&T_{12}^{(1)}&\ldots&T_{1K}^{(1)}\\\vdots&\vdots&\ddots&\vdots\\T_{K1}^{(K)}&T_{K2}^{(K)}&\ldots&T_{KK}^{(K)}\end{bmatrix}.
\label{eq:SM_6}
\end{equation} 
More detail on precoding for the DCSI-MIMO channel can be found in \cite{dekerret2011_ISIT_arXiv}.

\subsection{Optimized CSI Allocation}
Depending on the CSI accuracy at a TX, the coefficients implemented will, or will not, be very different from the coefficients obtained with perfect CSI. Considering a realistic constraint on the total number of CSI bits transmitted via the multiuser feedback channel, which we denote by $B_{\max}$, we want to find the allocation of the feedback bits maximizing the average sum rate. Thus, the optimization problem is then:
\begin{equation}
\maximize_{\{B_i^{(j)}\}}  \sum_{i=1}^K R_i \text{, s.t. $\sum_{j=1}^K\sum_{i=1}^K B_i^{(j)}\leq B_{\max}$}.
\label{eq:SM_8}
\end{equation}
However, this problem is intricate as the dependency of the sum rate as a function of the CSI estimate error does not have an easy structure. Based on the insight from the asymptotic analysis in the DCSI-MIMO channel~\cite{dekerret2011_ISIT_arXiv}, we approximate this optimization problem by the following one:
\begin{equation}
\minimize_{\{B_i^{(j)}\}} \sum_{i=1}^K \E[\norm{\bm{t}_i-\bm{t}_i^{\text{PCSI}}}^2]\text{, s.t. $\sum_{j=1}^K\sum_{i=1}^K B_i^{(j)}\leq B_{\max}$}
\label{eq:SM_9}
\end{equation}
where $\bm{t}_i^{\text{PCSI}}$ is the beamforming vector transmitting symbol $s_i$ using perfect CSI and is given by
\begin{equation}
\forall i\in\{1,\ldots,K\},\;\bm{t}_i^{\text{PCSI}}\triangleq \frac{\sqrt{P}}{\norm{\mathbf{H}^{-1}\bm{e}_i}}\mathbf{H}^{-1}\bm{e}_i .
\label{eq:SM_10}
\end{equation}
The intuition behind this figure of merit follows from the fact that, in the DCSI-MIMO channel, the consistency between the precoders implemented at the TXs has a critical impact on the performance.

It is widely believed that the accuracy of the CSI at a TX should somehow be a function of how much interference it actually contributes to each RX. However, despite the intuition, no previous paper has identified how the CSI quantization bits should be allocated throughout the network. Building upon some simplified models, we will show in the following that it is possible to obtain some quantitative results which can then be used as guidelines for practical systems.

\section{Wyner Model} \label{se:Wyner}
We consider a finite linear version of the Wyner model where $K$~TXs equipped with only one antenna are uniformly distributed along a line. A RX receives interference coming from its two direct neighboring TXs with an inter-cell attenuation factor equal to $\mu\in (0,1)$, while the short term fading is assumed to be Rayleigh distributed. Thus, the multiuser channel matrix $\mathbf{H}\in\mathbb{C}^{K\times K}$ is a tridiagonal matrix defined as
\begin{equation*}
\mathbf{H}\triangleq
\begin{bmatrix}
d_1&a_1\mu&0&\ldots&\ldots&\ldots&0\\
\mu b_2&d_2&\mu a_2&0&\ldots&\ldots&0\\ 
\ldots&\ddots&\ddots&\ddots&0&\ldots&0\\
\ldots&0&\mu b_i&d_i&\mu a_i&0&\ldots\\
0&\ldots&0&\ddots&\ddots&\ddots&0\\
0&\ldots&\ldots&0&\mu b_{K-1}&d_{K-1}&\mu a_{K-1}\\
0&\ldots&\ldots&\ldots&0&\mu b_K&d_K\\
\end{bmatrix}
\label{eq:Wyner}
\end{equation*}
where $b_i$, $d_i$, and $a_i$ are distributed as standard i.i.d. complex Gaussian ($\mathcal{CN}(0,1)$). 
We denote the estimates of $b_i$, $d_i$, and $a_i$ at TX~$j$ by $b_i^{(j)}$, $d_i^{(j)}$, and $a_i^{(j)}$, respectively, while the estimate errors are then $\Delta{b_i^{(j)}}$, $\Delta{d_i^{(j)}}$, and $\Delta{a_i^{(j)}}$, respectively.

\subsection{Inverse of a Tridiagonal Matrix}\label{se:Wyner:Inv}
The Wyner model allows to obtain a closed form for the channel inverse \cite{El-Mikkawy2005}, which we recall in the following and will be useful to quantify the effect of the distributed CSI.

We start by introducing two $K+1$ dimensional vectors $\bm{\beta}=[\beta_1,\ldots,\beta_{K+1}]$ and $\bm{\alpha}=[\alpha_0,\ldots,\alpha_K]$ which will be used in the matrix inverse and are defined as
\begin{equation}
\alpha_i=\begin{cases}
1&\text{, if i=0}\\
d_1&\text{, if i=1}\\
d_i\alpha_{i-1}-\mu^2 b_i a_{i-1}\alpha_{i-2}&\text{, if i=2,\ldots,K}\\
\end{cases}
\label{eq:Wyner_1}
\end{equation}
and
\begin{equation}
\beta_i=\begin{cases}
1&\text{, if i=K+1}\\
d_K&\text{, if i=K}\\
d_i\beta_{i+1}-\mu^2 b_{i+1} a_{i}\alpha_{i+2}&\text{, if i=1,\ldots,K-1}
\end{cases}.
\label{eq:Wyner_2}
\end{equation}

The channel inverse $\mathbf{H}^{-1}$ then reads as follows. First, the diagonal elements for $i=2,\ldots,n-1$ are 
\begin{equation}
\{\mathbf{H}^{-1}\}_{ii}\!=\!\left(d_i\!-\!\frac{\mu^2b_{i+1}a_{i}\beta_{i+2}}{\beta_{i+1}}-\frac{\mu^2b_i a_{i-1}\alpha_{i-2}}{\alpha_{i-1}}\right)^{-1}
\label{eq:CSI_2}
\end{equation}
while the two extremal diagonal elements are 
\begin{equation}
\begin{aligned}
\{\mathbf{H}^{-1}\}_{11}&=\left(d_1-\frac{\mu^2b_2a_1\beta_3}{\beta_2}\right)^{-1}\\
\{\mathbf{H}^{-1}\}_{nn}&=\left(d_n-\frac{\mu^2b_n a_{n-1}\alpha_{n-2}}{\alpha_{n-1}}\right)^{-1}.
\end{aligned}
\label{eq:Wyner_3}
\end{equation}
Last, the off-diagonal elements are given by
\begin{equation}
\begin{aligned} 
\{\mathbf{H}^{-1}\}_{ij}&=
\begin{cases}
(-\mu)^{j-i}(\prod_{k=1}^{j-i}a_{j-k})\frac{\alpha_{i-1}}{\alpha_{j-1}}\{\mathbf{H}^{-1}\}_{jj}&\text{if $i<j$}\\
(-\mu)^{i-j}(\prod_{k=1}^{i-j}b_{j+k})\frac{\beta_{i+1}}{\beta_{j+1}}\{\mathbf{H}^{-1}\}_{jj}&\text{if $i>j$}.
\end{cases}
\end{aligned}
\label{eq:Wyner_4}
\end{equation}

\subsection{CSI Allocation in the Wyner Model}\label{se:Wyner:CSI}
The optimization of the CSI allocation in the DCSI-MIMO channel is still involved, even in the Wyner model, as it requires evaluating the impact of the CSI estimate errors in the coefficients computed at \emph{each} TX based on its \emph{own local} CSI. Thus, we consider instead an approximation and study the CSI allocation at high SNR. We start by recalling the following asymptotic result in the DCSI-MIMO channel.
\begin{theorem} \cite{dekerret2011_ISIT_arXiv} 
To achieve the maximal MG in the DCSI-MIMO Rayleigh fading channel from $K$~TXs to $K$~RXs, it is necessary and sufficient to fulfill
\begin{equation}
\forall i\in\{1,\ldots,K\}, \lim_{P\rightarrow \infty}\E[\norm{\bm{t}_i-\bm{t}_i^{\text{PCSI}}}^2]=O(1)
\label{eq:Wyner_5}
\end{equation}
which is achieved with ZF using RVQ if and only if 
\begin{equation}
\forall i,j\in\{1,\ldots,K\}, \lim_{P\rightarrow \infty}\frac{B_{i}^{(j),\text{Full}}}{(K-1)\log_2(P)}=1.
\label{eq:Wyner_6}
\end{equation}
\label{theorem_MG}
\end{theorem}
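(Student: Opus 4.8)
The plan is to establish both equivalences by relating the average precoder error directly to the residual interference power, and then invoking the known distortion rate of RVQ. First I would recall that in this $K\times K$ setting each single-antenna RX can enjoy at most $\MG_i=1$, so that the maximal sum MG is $K$ and is attained only when every user simultaneously reaches $\MG_i=1$; hence it suffices to argue user by user. The crucial observation is that the perfect-CSI ZF precoder nulls all cross terms, $\bm{h}_i^{\He}\bm{t}_\ell^{\text{PCSI}}=0$ for $\ell\neq i$, so the entire interference seen at RX~$i$ is carried by the precoder mismatch: $\bm{h}_i^{\He}\bm{t}_\ell=\bm{h}_i^{\He}(\bm{t}_\ell-\bm{t}_\ell^{\text{PCSI}})$. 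Substituting this into the SINR isolates $\bm{t}_\ell-\bm{t}_\ell^{\text{PCSI}}$ as the single quantity governing the MG.

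For sufficiency, I would bound the interference by Cauchy-Schwarz, $|\bm{h}_i^{\He}(\bm{t}_\ell-\bm{t}_\ell^{\text{PCSI}})|^2\le\norm{\bm{h}_i}^2\,\norm{\bm{t}_\ell-\bm{t}_\ell^{\text{PCSI}}}^2$; since $\norm{\bm{h}_i}^2$ has bounded mean, an $O(1)$ precoder error gives an $O(1)$ average interference floor, while the useful power $|\bm{h}_i^{\He}\bm{t}_i|^2$ stays of order $P$ (the precoder norm is $\sqrt{P}$ and $\bm{t}_i$ remains asymptotically aligned with $\bm{t}_i^{\text{PCSI}}$). Feeding this into $R_i=\E[\log_2(1+\text{SINR}_i)]$ and controlling the expectation with Jensen's inequality then yields $\MG_i=1$. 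The converse is the delicate direction: I must show that if $\E[\norm{\bm{t}_\ell-\bm{t}_\ell^{\text{PCSI}}}^2]$ grows without bound, the average interference grows as well, so some $\MG_i<1$. The subtlety is that a large error vector could in principle be nearly orthogonal to $\bm{h}_i$; here I would exploit the isotropy of the RVQ quantization error, which guarantees that the mismatch retains, on average, a nonvanishing projection onto $\bm{h}_i$, forcing the interference to track the error magnitude.

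Having characterized max MG by the $O(1)$ error condition, the second equivalence follows from the distortion rate of RVQ. Since both $\bm{t}_\ell$ and $\bm{t}_\ell^{\text{PCSI}}$ have norm $\sqrt{P}$, I would factor out the power and work with the unit-norm beamformers, so that $\norm{\bm{t}_\ell-\bm{t}_\ell^{\text{PCSI}}}^2=P\,\norm{\hat{\bm{u}}_\ell-\bm{u}_\ell}^2$. The standard RVQ result for a unit vector in $\mathbb{C}^{K}$ gives an expected squared chordal distortion scaling as $2^{-B_i^{(j)}/(K-1)}$, whence $\E[\norm{\bm{t}_\ell-\bm{t}_\ell^{\text{PCSI}}}^2]$ behaves like $P\cdot 2^{-B_i^{(j)}/(K-1)}$. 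Requiring this to be $O(1)$ forces $2^{-B_i^{(j)}/(K-1)}=O(1/P)$, i.e. $B_i^{(j)}=(K-1)\log_2(P)\,(1+o(1))$, which is exactly $\lim_{P\to\infty}B_i^{(j),\text{Full}}/((K-1)\log_2 P)=1$.

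The main obstacle I anticipate is twofold and lives in the converse half. First, the channel is \emph{distributed}: the effective precoder $\bm{t}_i$ is assembled row-by-row from the different local estimates $\mathbf{H}^{(j)}$, so $\bm{t}_i-\bm{t}_i^{\text{PCSI}}$ aggregates quantization errors from several TXs, and I must verify that the per-link distortions combine into the claimed aggregate scaling rather than partially cancelling. Second, turning the growth of the mean error into a genuine MG penalty requires a matching lower bound on the interference, which is harder than the Cauchy-Schwarz upper bound used for sufficiency and is precisely where the isotropy argument must be made quantitative. I would handle the distributed aspect by linearizing the matrix inverse around the true channel, so that each entry's error is controlled by the corresponding $B_i^{(j)}$, reducing the problem to the per-link RVQ bound above.
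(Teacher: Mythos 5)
First, a point of reference: the paper itself contains no proof of Theorem~\ref{theorem_MG} --- it is quoted from \cite{dekerret2011_ISIT_arXiv} --- so your attempt can only be measured against the known proof route for that result. Your architecture is indeed that route: since the perfect-CSI ZF precoder nulls all cross terms, the interference at RX~$i$ is exactly $\bm{h}_i^{\He}(\bm{t}_\ell-\bm{t}_\ell^{\text{PCSI}})$, sufficiency follows from a Cauchy--Schwarz bound plus Jensen ($\E[\log_2(1+I)]\leq \log_2(1+\E[I])$), and the bit-scaling comes from the RVQ distortion $2^{-B/(K-1)}$ multiplied by the power $P$. That skeleton is sound.

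There are, however, genuine gaps where the distributed nature of the problem is essential and your sketch defers it. First, your factorization $\norm{\bm{t}_\ell-\bm{t}_\ell^{\text{PCSI}}}^2=P\,\norm{\hat{\bm{u}}_\ell-\bm{u}_\ell}^2$ rests on the claim that $\bm{t}_\ell$ has norm $\sqrt{P}$, which is false here: only each \emph{locally computed} $\bm{t}_\ell^{(j)}$ satisfies $\norm{\bm{t}_\ell^{(j)}}=\sqrt{P}$, while the effective $\bm{t}_\ell$ in \eqref{eq:SM_6} stacks the $j$-th entries of $K$ different vectors, each divided by a \emph{different} random normalization $\norm{(\mathbf{H}^{(j)})^{-1}\bm{e}_\ell}$; its norm is only approximately $\sqrt{P}$, so the direct appeal to the unit-vector RVQ distortion does not go through and one must instead propagate per-entry quantization errors through the matrix inverse \emph{and} through the $K$ distinct normalizations. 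Second, that propagation --- your ``linearize the inverse around the true channel'' remark --- is precisely the crux of the DCSI setting (the inconsistency between precoders computed from different estimates), and it is left as a promise: the converse requires showing that a single under-scaled $B_i^{(j)}$, for one pair $(i,j)$, already produces a polynomially growing entry error that cannot cancel against the other TXs' contributions, which needs the isotropy of the RVQ error in the orthogonal complement of the quantized direction together with the independence of the quantizations across TXs, made quantitative as a \emph{lower} bound on the resulting interference. Third, as written your necessity argument for \eqref{eq:Wyner_5} invokes RVQ isotropy and therefore only establishes necessity \emph{within} the ZF-with-RVQ scheme, whereas the first claim of the theorem is stated for the DCSI channel per se; separating the two equivalences cleanly (general precoders for \eqref{eq:Wyner_5}, ZF+RVQ only for \eqref{eq:Wyner_6}) would be needed to match the statement. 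These are fixable but substantive omissions, not bookkeeping.
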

Note that the $K-1$ in \eqref{eq:Wyner_6} is replaced here by a~$2$ as it corresponds to the non-zeros elements in the channel vector.

The optimal CSI allocation can be seen to converge to the uniform CSI allocation as the SNR increases. However, a uniform allocation is clearly very suboptimal at finite SNR, particularly in a large network with pathloss, where a distant TX emits very little interference. Therefore, the geometry of the network has to be used in order to derive an efficient CSI allocation at finite SNR.

We can observe in the expressions given in Subsection~\ref{se:Wyner:Inv} for the inverse of a tridiagonal channel that the amplitude of the coefficients in the inverse decreases exponentially as the elements get away from the diagonal. This is actually a consequence of the Wyner model which models the pathloss attenuation by considering interference only from direct neighbors. In the following, we make use of this property to derive an efficient CSI allocation. 

To derive analytical results at high SNR, we consider that the inter-cell coefficient $\mu$ can be written as a fixed fraction of $P$, i.e., $P^{-\varepsilon}$ for a certain $\varepsilon\in (0,1)$. Thus, the inter-cell coefficient decreases as the power increases. Note that this is only an artefact to model the impact of the inter-cell attenuation in the high SNR regime and that we show by simulations that the performance remain good as $\mu$ increases.

\begin{theorem} %
Considering that $\mu=P^{-\varepsilon}$ for a fixed $\varepsilon\in (0,1)$ and that random vector quantization (RVQ) is used, it is sufficient in order to achieve the maximal MG to quantize the channel vectors $\bm{h}_i^{\He}$ at TX~$j$ with a number of bits
\begin{equation}
B_i^{(j)}=\max(2\log_2(P\mu^{2|i-j|}),0).
\label{eq:Wyner_7}
\end{equation}
\label{theorem_CSI_alloc_Wyner}
\end{theorem}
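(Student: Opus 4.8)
The plan is to reduce the statement, via Theorem~\ref{theorem_MG}, to the control of a single averaged quantity and then to exploit the exponential decay of the tridiagonal inverse in order to match the quantization budget to the magnitude of each precoder coefficient. By Theorem~\ref{theorem_MG}, the maximal MG is achieved as soon as $\E[\norm{\bm{t}_i-\bm{t}_i^{\text{PCSI}}}^2]=O(1)$ for every $i$, so it is enough to exhibit a bit allocation for which this holds. Since the effective precoder \eqref{eq:SM_6} collects its $j$-th coordinate from TX~$j$, I would first decompose
\begin{equation*}
\E[\norm{\bm{t}_i-\bm{t}_i^{\text{PCSI}}}^2]=\sum_{j=1}^K\E\bigl[|\{\bm{t}_i^{(j)}\}_j-\{\bm{t}_i^{\text{PCSI}}\}_j|^2\bigr],
\end{equation*}
and, because $K$ is fixed with respect to $P$, reduce the target to showing that each of the $K$ summands is $O(1)$ in $P$.

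The second step is to quantify the magnitude of each coordinate of the ideal precoder. Using the closed form of Subsection~\ref{se:Wyner:Inv}, in particular the off-diagonal expression \eqref{eq:Wyner_4}, the coordinate $\{\bm{t}_i^{\text{PCSI}}\}_j=\sqrt{P}\{\mathbf{H}^{-1}\}_{ji}/\norm{\mathbf{H}^{-1}\bm{e}_i}$ carries a factor $(-\mu)^{|i-j|}$, while the remaining products of Gaussian coefficients, the ratios $\alpha_{\cdot}/\alpha_{\cdot}$ and $\beta_{\cdot}/\beta_{\cdot}$, and the normalization $\norm{\mathbf{H}^{-1}\bm{e}_i}$ are all $\Theta(1)$ in $P$ (the normalization being dominated by the $\Theta(1)$ diagonal entry $\{\mathbf{H}^{-1}\}_{ii}$). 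Hence $|\{\bm{t}_i^{\text{PCSI}}\}_j|^2=O(P\mu^{2|i-j|})$, which is exactly the scale appearing inside the logarithm of \eqref{eq:Wyner_7}.

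The third step couples this magnitude with the RVQ accuracy. Since TX~$j$ quantizes the row $\bm{h}_i^{\He}$ with $B_i^{(j)}$ bits and this row has a fixed number of non-zero entries (its two interfering neighbours, replacing the $K-1$ of \eqref{eq:Wyner_6} by $2$), RVQ yields a relative quantization error scaling as $2^{-B_i^{(j)}/2}$. Propagating this error through the inverse by the first-order perturbation $(\mathbf{H}^{(j)})^{-1}-\mathbf{H}^{-1}\approx-\mathbf{H}^{-1}\Delta\mathbf{H}^{(j)}\mathbf{H}^{-1}$, the $j$-th coordinate of the error inherits the same $\mu^{|i-j|}$ decay as the coordinate itself, so that $\E[|\{\bm{t}_i^{(j)}\}_j-\{\bm{t}_i^{\text{PCSI}}\}_j|^2]=O(P\mu^{2|i-j|}\,2^{-B_i^{(j)}/2})$. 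Imposing that this be $O(1)$ gives $2^{B_i^{(j)}/2}\gtrsim P\mu^{2|i-j|}$, i.e. $B_i^{(j)}\ge 2\log_2(P\mu^{2|i-j|})$; when the coordinate is already $\Theta(1)$-small, that is when $P\mu^{2|i-j|}=O(1)$, no feedback is needed and the requirement collapses to $B_i^{(j)}\ge 0$. Combining the two regimes yields precisely \eqref{eq:Wyner_7}.

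The main obstacle I anticipate is the third step, namely making rigorous the claim that allocating $B_i^{(j)}$ bits to the single row $\bm{h}_i^{\He}$ suffices to control the full coordinate error, given that the matrix inverse mixes the quantization errors of \emph{all} rows of $\mathbf{H}^{(j)}$. One must carry out the perturbation bookkeeping carefully and use the geometric $\mu$-decay of the inverse entries to show that the cross-terms coming from the neighbouring rows $\ell\neq i$ are subdominant, or are themselves absorbed by the analogous allocations $B_\ell^{(j)}$. In parallel, one must control the randomness of the Gaussian coefficients so that the normalization and the $\alpha,\beta$ ratios remain bounded in expectation away from the events where $\mathbf{H}$ is nearly singular. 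Handling these tails, rather than the exponent matching, is where the technical effort concentrates.
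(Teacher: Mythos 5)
Your proposal is correct in substance and, at its core, uses the same mechanism as the paper: reduce via Theorem~\ref{theorem_MG} to the condition \eqref{eq:Wyner_5}, observe from the closed form \eqref{eq:Wyner_4} that the $j$-th coordinate of $\bm{t}_i^{\text{PCSI}}$ carries power $O(P\mu^{2|i-j|})$, and match this against the RVQ resolution $2^{-B/2}$ (the exponent $2$ coming from the two effective dimensions of a row with three nonzero entries, as the paper notes after Theorem~\ref{theorem_MG}). Where you genuinely differ is in the perturbation machinery. The paper expands the explicit $\alpha,\beta$ recursions of the tridiagonal inverse to first order in the estimation errors (its steps \eqref{eq:Wyner_8}, \eqref{eq:Wyner_9}) and then successively raises the order of the approximation in $\mu$ to extract, one coefficient at a time, the accuracy requirements --- in particular discovering that the $b_\ell$ enter only at order $\mu^2$ and hence need only $\max(2\log_2(P\mu^{2(i-j+2)}),0)$ bits. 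You instead use the generic resolvent identity $(\mathbf{H}^{(j)})^{-1}-\mathbf{H}^{-1}\approx-\mathbf{H}^{-1}\Delta\mathbf{H}^{(j)}\mathbf{H}^{-1}$ together with the $\mu^{|i-j|}$ decay of the inverse entries. Your route buys uniformity: the contribution of the error in row $\ell$ to the $(j,i)$ coordinate decays like $\mu^{|j-\ell|+|\ell-i|}$ (with an extra $\mu$ for the off-diagonal entries), and since $|j-\ell|+|\ell-i|\geq|i-j|$, a single pass recovers all the per-coefficient requirements that the paper obtains by iterated refinement, including the weaker requirement on $b_\ell$, whose entry $(\ell,\ell-1)$ yields exponent $|i-j|+2$ for intermediate $\ell$. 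It also anticipates the paper's later extension to banded and exponentially decaying matrices via Demko and Jaffard. The paper's route, in exchange, produces fully explicit expressions tied to the Wyner structure.

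One local repair is needed: the displayed bound in your third step, $\E[|\{\bm{t}_i^{(j)}\}_j-\{\bm{t}_i^{\text{PCSI}}\}_j|^2]=O(P\mu^{2|i-j|}\,2^{-B_i^{(j)}/2})$, is not justified as written, since the coordinate error mixes the quantization errors of \emph{all} intermediate rows $\bm{h}_\ell$, $\ell$ between $j$ and $i$, not only of $\bm{h}_i$ --- precisely the obstacle you flag yourself. The fix you sketch is the right one and is exactly the content of the paper's bookkeeping: the allocation \eqref{eq:Wyner_7} gives $B_\ell^{(j)}\geq B_i^{(j)}$ to the closer rows, so each cross-term contributes at most $P\mu^{2|i-j|}2^{-B_\ell^{(j)}/2}=\mu^{2(|i-j|-|\ell-j|)}=O(1)$. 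Also note that the per-stream normalization at TX~$j$ uses the \emph{estimated} norm $\norm{(\mathbf{H}^{(j)})^{-1}\bm{e}_i}$, whose zeroth order is $1/|d_i^{(j)}|$, so the accuracy of $d_i$ must itself meet the $B_i^{(j)}$ level; the paper verifies at the end of its proof that this is already implied, and your argument should do the same rather than treating the normalization as a deterministic $\Theta(1)$ factor. The tail issue you raise (divergent moments of $1/|d_i|^2$ near singular realizations) is real but is equally glossed over in the paper, so it is not a gap relative to the reference proof.
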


\begin{proof}
We start by inserting the CSI error estimates in the off-diagonal elements of the inverse given in Subsection~\ref{se:Wyner:Inv}. We consider the coefficient with $j<i$ as the case $j>i$ follows by symmetry. In a first step we consider the coefficient at TX~$j$ corresponding to the transmission of stream~$i$.
\begin{equation}
\begin{aligned} 
&\frac{\{\mathbf{H}^{(j)-1}\}_{ji}}{\{\mathbf{H}^{(j)-1}\}_{ii}}= 
(-\mu)^{i-j}\tfrac{\hat{\alpha}_{j-1}^{(j)}+\Delta{{\alpha}_{j-1}^{(j)}}}{\hat{\alpha}_{i-1}^{(j)}+\Delta\alpha_{i-1}^{(j)}}\prod_{k=1}^{i-j}(\hat{a}^{(j)}_{i-k}+\Delta a^{(i)}_{i-k})\\
&\approx (-\mu)^{j-i}\tfrac{\hat{\alpha}_{j-1}^{(j)}+\Delta\alpha_{j-1}^{(j)}}{\hat{\alpha}_{i-1}^{(j)}+\Delta\alpha_{i-1}^{(j)}}\left(\prod_{k=1}^{i-j}\hat{a}^{(j)}_{i-k}\right)\left(\sum_{\ell=1}^{i-j}\frac{\Delta a^{(j)}_{j+\ell}}{\hat{a}^{(j)}_{j+\ell}}\right)
\end{aligned}
\label{eq:Wyner_8}
\end{equation}
where we have considered that the estimation errors are small as the CSI allocation will be derived so as to fulfill \eqref{eq:Wyner_5}, so that we can do a first order approximation in the estimation error terms. To proceed, we approximate the first ratio in \eqref{eq:Wyner_8} by keeping only the first order coefficients in $\mu$ which gives using \eqref{eq:Wyner_1}:
\begin{equation}
\begin{aligned}
\frac{\hat{\alpha}_{j-1}^{(j)}+\Delta\alpha_{j-1}^{(j)}}{\hat{\alpha}_{i-1}^{(j)}+\Delta\alpha_{i-1}^{(j)}}&\approx \prod_{k=1}^{i-j}(\hat{d}^{(j)}_{i-k}+\Delta d^{(j)}_{i-k})^{-1}\\
&\approx\left(\prod_{k=1}^{i-j}\hat{d}^{(j)}_{i-k}\right)^{-1}\left(\sum_{\ell=1}^{i-j}\frac{\Delta d^{(j)}_{j+\ell}}{\hat{d}^{(j)}_{j+\ell}}\right)^{-1}.
\end{aligned}
\label{eq:Wyner_9}
\end{equation}
To obtain an approximate closed form for the channel inverse, we also need to approximate the diagonal term defined in \eqref{eq:CSI_2} in the denominator of the Left Hand Side (LHS) in \eqref{eq:Wyner_8} which is done as 
\begin{equation}
\{\mathbf{H}^{-1}\}_{ii}\!=\!\left(d_i\!-\!\frac{\mu^2b_{i+1}a_{i}\beta_{i+2}}{\beta_{i+1}}-\frac{\mu^2b_i a_{i-1}\alpha_{i-2}}{\alpha_{i-1}}\right)^{-1}\approx\frac{1}{d_i}.
\label{eq:Wyner_10}
\end{equation}

To fulfill condition \eqref{eq:Wyner_5}, the sum of all the error terms arising in \eqref{eq:Wyner_8} has to tend to zero as $O(1/P)$ due to the fact that the channel inverse elements given in \eqref{eq:Wyner_8} are further normalized so as to fulfilled the \emph{per-stream} power constraint of~$P$. 

Inserting \eqref{eq:Wyner_9} and \eqref{eq:Wyner_10} in \eqref{eq:Wyner_8} we conclude by inspection that this condition is fulfilled on the first order when $a_{\ell}^{(j)}$ and $d_{\ell}^{(j)}$ for $\ell\in \{i-1,i-2,\ldots,j\}$ as well as $d^{(j)}_i$ are obtained from the quantization of the channel vector $\bm{h}_{\ell}$  for $\ell\in \{i-1,i-2,\ldots,j\}$ with $\max(2\log_2(P\mu^{2(i-j)}),0)$ bits. 

This conditions corresponds to the transmission of symbol $s_i$ at TX~$j$ and is a consequence that the average power used for that coefficient is $O(P\mu^{2(i-j)})$ instead of $O(P)$ such that the accuracy of the CSI can be reduced.

These requirements on the CSI quantization correspond to the strongest requirement for that stream at that TX, but the other channel parameters need obviously to be also known, although with a lower accuracy. Particularly, no CSI requirement has been made on the $\hat{b}^ {(j)}_{\ell}. $It is thus necessary to improve the accuracy of the approximation done in \eqref{eq:Wyner_8} and \eqref{eq:Wyner_9}. Using the definition of $\alpha_{i}$, we can write
\begin{equation}
\begin{aligned}
\frac{\hat{\alpha_{j}}^{(j)}}{\hat{\alpha_{i}}^{(j)}}&=\frac{\hat{\alpha_{j}}^{(j)}}{\hat{d}_i^{(j)}\hat{\alpha}^{(j)}_{i-1}-\mu^2 \hat{b}^{(j)}_i \hat{a}^{(j)}_{i-1}\hat{\alpha}^{(j)}_{i-2}},\\
&\approx\frac{1}{\hat{d}_i^{(j)}\hat{\alpha}^{(j)}_{i-1}}\left(1+\frac{\mu^2 \hat{b}^{(j)}_i \hat{a}^{(j)}_{i-1}\hat{\alpha}^{(j)}_{i-2}}{\hat{d}_i^{(j)}\hat{\alpha}^{(j)}_{i-1}}\right)\\
\end{aligned}
\label{eq:Wyner_10}
\end{equation}
after keeping this time the first order term. The ratio of the two $\hat{\alpha}^{(j)}_{\ell}$ can then be approximated by the product of the $\hat{d}_{\ell}$ using the zero-th order approximation as in \eqref{eq:Wyner_9}. Inserting this expression in \eqref{eq:Wyner_8}, we obtain the additional conditions that $\hat{b}_i^{(j)}$, $\hat{a}_{i-1}^{(j)}$, $\hat{d}_i^{(j)}$ and $\hat{d}_{i+1}^{(j)}$ should be resulting from the quantization of the channel vector with at least~$\max(2\log_2(P\mu^{2(i-j+2)}),0)$ bits, from which only the conditions on $\hat{b}_i^{(j)}$ and $\hat{d}_{i+1}^{(j)}$ are new. 

Proceeding similarly for increasing number of terms kept, we increase the accuracy of the approximation made in \eqref{eq:Wyner_10} to the first for order to obtain the condition that $\hat{b}_{i+1}^{(j)}$, $\hat{a}_{i}^{(j)}$, $\hat{b}_{i}^{(j)}$, and $\hat{a}_{i-1}^{(j)}$ should be quantized from a RVQ with at least~$\max(2\log_2(P\mu^{2(i-j+2)}),0)$ bits, from which only the conditions on $\hat{b}_{i+1}^{(j)}$ and $\hat{a}_{i}^{(j)}$ are new.

Increasing further the accuracy of the approximation leads to a requirement for each term. However, for a given TX, this has to be done for all the streams to be transmitted. Thus, only the strongest requirements are of real interest in our case. Considering only the three constraints derived previously to have $\hat{a}^{j}_{i-1}$ and $\hat{d}^{j}_i$ from a RVQ with $\max(2\log_2(P\mu^{2(i-j)}),0)$ bits, and $\hat{b}^{j}_{i+1}$ from a RVQ with $\max(2\log_2(P\mu^{2(i-j+2)}),0)$ bits, we can see that considering only these three constraint for all the streams transmitted leads to the most constraining requirements. 

Additionnaly, as the elements from the vector are quantized jointly, we obtain the final requirement for the channel inverse. The last point to discuss comes from the normalization following the power constraint fullfilment. Indeed, each vector of the channel inverse if divided by its norm and multiplied by $\sqrt{P}$. On the zero-th order, the norm of the channel inverse beamformer is equal to $1/d_i$, which has then to be quantized with $\max(2\log_2(P\mu^{2(i-j)}),0)$ bits. This is already the case which conclude the proof. 
\end{proof}

\emph{Remark $1$:} In fact, the difference of required accuracy between $d_j^{(j)}$, and the off-diagonal elements $a_j^{(j)}$ and $b_j^{(j)}$ comes only from the fact that the attenuation factor $\mu$ is known such that the outer-diagonal elements can be factorized into the attenuation factor $\mu$ and the Rayleigh fading part. Thus, the estimation error is directly attenuated by the inter-cell attenuation factor $\mu$. 

yet, this result should be interpreted in the right way which is that the estimation error for the three coefficients $\mu a_j^{(j)}$, $d_j^{(j)}$ and $\mu b_j^{(j)}$ should have the same variance. Another way to understand it, is that the variance of the interfering coefficient is smaller so that the number of bits necessary to quantize it with the \emph{same accuracy} is smaller. Using random scalar quantization for each coefficient with the knowledge of the attenuation factor, the condition given in the proposition can be replaced by the condition
\begin{equation}
B_i^{(j)}=\sum_{k=1}^3\max(\log_2(P\mu^{2|k-j|}\mu^{2|i-j|}),0).
\end{equation}

\emph{Remark $2$:} A \emph{per stream} power constraint has been considered instead of a constraint for the constraint over the whole precoder. An additional \emph{per antenna} constraint could be further introduced but this would complicate the analysis. The property that has to be fulfilled to allow for the optimization with only local CSI is that the optimization is considered for each stream separately. Indeed, using a total power constraint would lead the TXs to normalize the precoder using the beamforming vectors corresponding to users very far away, i.e., using very inaccurate CSI. This is problematic in the DCSI-MIMO channel as this reduces the consistency between the TXs.

Intuitively, the CSI allocation is based on the fundamental property that a given TX contributes a large power only to the streams transmitted to neighboring RXs. For the symbols transmitted to distant RXs, it contributes to the transmission of the stream with only a small power, so that a less accurate CSI is needed to maintain a constant interference level created by the transmission of that stream.

The CSI allocation derived in Theorem~\ref{theorem_CSI_alloc_Wyner} reduces significantly the requirements on the CSI. Particularly, we will show in the following that it leads to the transmission of the CSI only at a local scale, thus giving a CSI by TX whish does not scale with the number of TXs. In relation with this fundamental property, we call the CSI allocation given in Theorem~\ref{theorem_CSI_alloc_Wyner} as the \emph{Decaying CSI allocation}.

\subsection{CSI Scaling in the Number of Cooperating TXs} 
In the following, we consider the total number of feedback bits shared through the backhaul network which is of interest to us, as it is a benchmark of the feasibility of the cooperation. Particularly, we are interested in the scaling behavior of the total number of bits for large number of cooperating TXs. In the full CSI allocation given in Theorem~\ref{theorem_MG}, it can be shown to be equal to:
\begin{equation}
B^{\text{Full}}=\sum_{j=1}^K B^{(j),\text{Full}}=K^2\log_2(P)
\label{eq:CSI_scale_1}
\end{equation}
where we have denoted by $B^{(j)}$ the total number of feedback bits allocated to TX~$j$. The total number of feedback bits $B^{\text{Full}}$ increases quadratically with the number of cooperating TXs. This is problematic for large cooperation areas and is no longer the case in the proposed CSI allocation.

\begin{proposition}
For finite $P$ and $\mu$, the CSI allocation given in Theorem~\ref{theorem_CSI_alloc_Wyner} scales linearly in the number~$K$ of cooperating TXs, which means that the number of CSI bits required by TX/RX pairs does not increase with the number of cooperating TX~$K$.  
\label{proposition_scaling_wyner}
\end{proposition}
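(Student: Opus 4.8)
The plan is to show directly from the closed-form allocation \eqref{eq:Wyner_7} that each transmitter is only asked to quantize a number of channel vectors that is bounded independently of~$K$. First I would rewrite the per-pair allocation by expanding the logarithm. Writing $c\triangleq\log_2(1/\mu)>0$ since $\mu\in(0,1)$, the quantity inside the $\max$ becomes
\begin{equation}
2\log_2(P\mu^{2|i-j|})=2\log_2(P)-4|i-j|\,c ,
\end{equation}
which is a strictly decreasing affine function of the distance $|i-j|$ between the serving TX and the served RX.

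The key observation is then that $B_i^{(j)}=0$ as soon as this argument becomes nonpositive, i.e. whenever $|i-j|\geq m^\star$ with the threshold
\begin{equation}
m^\star\triangleq\frac{\log_2(P)}{2\log_2(1/\mu)} .
\end{equation}
Crucially, $m^\star$ depends only on the fixed parameters $P$ and $\mu$, and not on~$K$. Consequently, for any TX~$j$ the indices $i$ yielding a nonzero allocation all lie in the band $|i-j|<m^\star$, so their number is at most $2m^\star+1$, a constant (boundary transmitters have strictly fewer). Since each surviving term is upper bounded by its value at $|i-j|=0$, namely $2\log_2(P)$, the total number of bits spent by TX~$j$ satisfies
\begin{equation}
B^{(j)}=\sum_{i=1}^K B_i^{(j)}\leq (2m^\star+1)\,2\log_2(P) ,
\end{equation}
a bound independent of~$K$. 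This already establishes the second assertion of the proposition, namely that the CSI required per TX (and hence per TX/RX pair) does not grow with the cooperation size.

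Summing over the $K$ transmitters then gives $B^{\text{Decaying}}=\sum_{j=1}^K B^{(j)}\leq K(2m^\star+1)2\log_2(P)$, i.e. linear growth in~$K$, in contrast with the quadratic scaling of the full allocation in \eqref{eq:CSI_scale_1}. I do not expect a genuine obstacle here, as the argument is essentially a geometric-decay counting bound driven by $\mu<1$. The only point requiring care is the truncation by $\max(\cdot,0)$, which must be handled cleanly to guarantee that the per-TX count is a true $K$-independent constant rather than merely a slowly growing quantity; once the band width $m^\star$ is pinned down this is immediate, and if a sharper constant were desired one could replace the crude bound above by the exact finite sum of the surviving terms.
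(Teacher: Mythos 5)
Your proof is correct and follows essentially the same route as the paper's: both arguments rest on the observation that $B_i^{(j)}=\max(2\log_2(P\mu^{2|i-j|}),0)$ vanishes once $|i-j|$ exceeds a threshold depending only on $P$ and $\mu$ (your $m^\star$ plays exactly the role of the paper's $K_0$ with $\mu^{K_0}<1/\sqrt{P}$), so each TX quantizes only a bounded number of channel vectors and the total feedback grows linearly in $K$. Your version is marginally tidier in that you give the explicit $K$-independent bound $B^{(j)}\leq(2m^\star+1)\,2\log_2(P)$ and avoid the paper's ``wlog middle TX, $K$ odd'' device, but this is a polish of the same counting argument, not a different proof.
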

\begin{proof}
Let consider $\mu$ and $P$ to be given. The attenuation factor $\mu$ takes its value in~$(0,1)$, so that it exists a natural number $K_0$ verifying $\mu^{K_0}<1/\sqrt{P}$. We consider $K$ to be very large (and odd to simplify the notation) such that we can consider wlog a TX placed in the middle of the cooperation domain. Using the CSI allocation from Theorem~\ref{theorem_CSI_alloc_Wyner}.
\begin{equation}
\begin{aligned}
B^{(j)}=\sum_{i=1}^K B_i^{(j)}&=\sum_{i=1}^K\max(2\log_2(P\mu^{2|i-j|}),0)\\
&=2\sum_{i'=1}^{K/2}\max(2\log_2(P\mu^{2i'}),0).
\end{aligned}
\end{equation}
Letting $K$ tend to infinity, the sum has only a finite number of positive terms as the logarithm will be negative for $i'>K_0$. Thus the $B^{(j)}$ do not scale with $K$ which ends the proof.
\end{proof}

The insight behind the limited CSI allocation by TX is that the exponential decay of the off-diagonal coefficients in the precoding matrix is exploited to reduce the cooperation to a \emph{local scale}. 

\subsection{User's Data Symbol Allocation}
In this work, we have considered the user's data symbols to be shared between all the TXs. This is very constraining particularly when considering networks of large size.

yet, our analysis for the CSI allocation has introduced a notion of \emph{local Cooperation} which will use in the following to derive a routing solution for the user's data symbol which also requires only local Cooperation.

\begin{proposition}
Considering that $\mu=P^{-\varepsilon}$ for a fixed $\varepsilon\in (0,1)$, sharing the symbol $s_i$ only to the TXs which verify $2\log_2(P\mu^{2|i-j|})>0$ leads to no loss in MG. 

Furthermore, for finite $P$ and $\mu$ and increasing $K$, this user's data symbol sharing scales linearly in the number of TX/RX pairs, i.e., each symbol is shared only to a finite number of cooperating TXs. 
\label{proposition_data_wyner}
\end{proposition}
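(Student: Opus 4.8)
The plan is to treat the two assertions separately—the first via the beamformer-distance criterion of Theorem~\ref{theorem_MG}, the second by the counting argument of Proposition~\ref{proposition_scaling_wyner}. For the no-loss-in-MG claim I would first rewrite the sharing rule: with $\mu=P^{-\varepsilon}$ the inequality $2\log_2(P\mu^{2|i-j|})>0$ is equivalent to $P\mu^{2|i-j|}>1$, i.e.\ to $|i-j|<1/(2\varepsilon)$, so $s_i$ is withheld from TX~$j$ precisely when $B_i^{(j)}=0$ under the allocation of Theorem~\ref{theorem_CSI_alloc_Wyner}. Let $\tilde{\bm{t}}_i$ be the beamformer obtained from $\bm{t}_i$ by setting to zero the coordinates $T_{ji}^{(j)}$ of the non-sharing TXs (renormalizing, if an exact per-stream constraint is required, so that $\norm{\tilde{\bm{t}}_i}^2=P$). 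Since $\tilde{\bm{t}}_i$ is the beamformer actually implemented, it suffices by the criterion \eqref{eq:Wyner_5} of Theorem~\ref{theorem_MG} to show $\lim_{P\to\infty}\E[\norm{\tilde{\bm{t}}_i-\bm{t}_i^{\text{PCSI}}}^2]=O(1)$.

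The key estimate is the exponential decay of the off-diagonal precoding coefficients. By the closed form \eqref{eq:Wyner_4}, the entry $\{\mathbf{H}^{-1}\}_{ji}$ carries an explicit factor $(-\mu)^{|i-j|}$ multiplying a product of $O(1)$ Gaussian coefficients and a ratio of the $\alpha$'s (or $\beta$'s); as $\mu$ is known exactly and factors out, the implemented coefficients obtained from $(\mathbf{H}^{(j)})^{-1}$ inherit this decay, and after the normalization $\sqrt{P}/\norm{\mathbf{H}^{-1}\bm{e}_i}$ one gets $\E[|T_{ji}^{(j)}|^2]=O(P\mu^{2|i-j|})$. Summing the removed coordinates as a geometric series then gives
\begin{equation}
\E\big[\norm{\tilde{\bm{t}}_i-\bm{t}_i}^2\big]=O\!\left(\sum_{m\geq 1/(2\varepsilon)}P\mu^{2m}\right)=O(1),
\end{equation}
since the series sums to at most $P\mu^{1/\varepsilon}/(1-\mu^2)=1/(1-\mu^2)$ using $\mu^{1/\varepsilon}=P^{-1}$. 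Combining this with the guarantee $\E[\norm{\bm{t}_i-\bm{t}_i^{\text{PCSI}}}^2]=O(1)$ of Theorem~\ref{theorem_CSI_alloc_Wyner} through the triangle inequality yields $\E[\norm{\tilde{\bm{t}}_i-\bm{t}_i^{\text{PCSI}}}^2]=O(1)$ (the renormalization factor is $1+O(1/P)$ and hence harmless), so Theorem~\ref{theorem_MG} delivers the maximal MG.

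For the linear-scaling claim I would mirror Proposition~\ref{proposition_scaling_wyner}. With $P$ and $\mu\in(0,1)$ fixed, $2\log_2(P\mu^{2|i-j|})>0$ reads $\mu^{|i-j|}>1/\sqrt{P}$, so taking $K_0$ to be the smallest integer with $\mu^{K_0}<1/\sqrt{P}$, the symbol $s_i$ is shared only to the TXs with $|i-j|<K_0$. Their number is at most $2K_0-1$, a constant independent of $K$, so the data-symbol sharing scales linearly in the number of TX/RX pairs.

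The main obstacle is the decay estimate $\E[|T_{ji}^{(j)}|^2]=O(P\mu^{2|i-j|})$: one has to verify that the product of Gaussian factors together with the ratio of consecutive $\alpha$'s (resp.\ $\beta$'s) in \eqref{eq:Wyner_4}, and the normalization $\norm{\mathbf{H}^{-1}\bm{e}_i}$, contribute only an $O(1)$ multiplicative factor in expectation, so that $\mu^{2|i-j|}$ is genuinely the dominant scaling. Once this is established, the geometric summation and the triangle-inequality step are routine.
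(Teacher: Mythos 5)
Your proof is correct and ultimately rests on the same mechanism as the paper's, but it is a genuinely more explicit argument than what the paper writes. The paper's own proof is two sentences: when $2\log_2(P\mu^{2|i-j|})<0$, TX~$j$ receives zero CSI bits for $\bm{h}_i$ under Theorem~\ref{theorem_CSI_alloc_Wyner}, so spending non-vanishing power on stream~$i$ at that TX would cost MG; hence TX~$j$ does not participate in transmitting $s_i$ and the symbol need not be routed to it, while the finite-sharing claim is lifted directly from Proposition~\ref{proposition_scaling_wyner} (your $K_0$ counting in the second part is exactly that argument). You instead construct the truncated beamformer $\tilde{\bm{t}}_i$, bound the removed mass by the geometric series $\sum_{m\geq 1/(2\varepsilon)}P\mu^{2m}=O(1)$, and close via the criterion \eqref{eq:Wyner_5} of Theorem~\ref{theorem_MG}; this buys a quantitative certificate that the truncation itself is MG-free, whereas the paper buys brevity by leaning on the power estimate $O(P\mu^{2(i-j)})$ already embedded in the proof of Theorem~\ref{theorem_CSI_alloc_Wyner}. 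One caveat on the step you yourself flag as the main obstacle: under Rayleigh fading, the expectation $\E[|T_{ji}^{(j)}|^2]$ computed through the first-order approximation involves ratios $\prod_k 1/d_k$ with $\E[1/|d_k|^2]=\infty$, so the clean bound $O(P\mu^{2|i-j|})$ does not hold literally; one must combine it with the deterministic bound $|T_{ji}^{(j)}|\leq\sqrt{P}$ and truncate the heavy tail, which degrades the geometric sum to $O(\log P)$ rather than $O(1)$ --- still harmless for the MG conclusion (the induced rate loss is $O(\log_2\log P)$, vanishing against $\log_2 P$), and no worse than the paper's own standard of rigor, which makes the same first-order approximations (e.g., $\{\mathbf{H}^{-1}\}_{ii}\approx 1/d_i$) without addressing these tails. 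Two minor points: your ``triangle inequality'' for squared norms needs the factor-of-two form $\norm{x+y}^2\leq 2\norm{x}^2+2\norm{y}^2$ (immaterial for $O(1)$ conclusions), and your renormalization remark is fine since the removed mass is $O(1)$ against $\norm{\tilde{\bm{t}}_i}^2=P(1+O(1/P))$.
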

\begin{proof}
The proof follows easily from the proof of Theorem~\ref{theorem_CSI_alloc_Wyner} because when $2\log_2(P\mu^{2|i-j|})<0$, TX~$j$ is allocated with no CSI  such that this would lead to a MG if the power used to transmit that symbol were positive. Thus, TX~$j$ does not participate to the transmission of symbol~$s_i$, which therefore doesn't need to be shared to TX~$j$.

The finite number of cooperating TXs follows then directly from the same property for the CSI allocation given in Propositions~\ref{proposition_scaling_wyner}.
\end{proof}

The property of sharing the symbols only to a finite number of TXs is interesting as combined with the local CSI allocation, it gives a complete alternative to clustering algorithms. Both solutions have similar requirements for the backhaul and can be fairly compared.

Thus, our alternative clustering solution based on the \emph{Decaying CSI allocation} could then be implemented in realistic settings. In fact, we will see in the following sections, that the results can be generalized to more realistic settings.

\subsection{Extension to More General Band Limited Channel Models}
The analysis was carried out in a symmetric and regular linear Wyner model but can be extended without difficulties to aribtrary banded matrices using bounds on the amplitude of the off-diagonal coefficients\cite{Demko1984}, which we recall in the following.
\begin{theorem}[Demko$1984$]
Let $\mathbf{A}$ be an operator on a linear space, $m$-banded, centered, bounded and boundedly invertible, then
\begin{equation}
\{\mathbf{A}^{-1}\}_{i,j}\leq C \lambda^{|i-j|}
\end{equation}
where 
\begin{equation}
\begin{aligned}
\lambda&=\left(\frac{\text{cond}(\mathbf{A})-1}{\text{cond}(\mathbf{A})+1}\right)^{\frac{1}{m}}\\
C&=(m+1)\lambda^{-m}\|\mathbf{A}^{-1}\|\text{cond}(\mathbf{A})\max\left(1,\left[\frac{1+\text{cond}(\mathbf{A})}{\sqrt{2}\text{cond}(\mathbf{A})}\right]^2\right).
\end{aligned}
\end{equation}
\label{theorem_Demko}
\end{theorem}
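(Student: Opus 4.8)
The plan is to prove the bound along the classical route of Demko, Moss and Smith: polynomial approximation of $x\mapsto 1/x$ on the spectrum, combined with the elementary observation that powers of a banded matrix stay banded. First I would reduce the general case to the Hermitian positive definite (HPD) one. Writing $\mathbf{B}\triangleq\mathbf{A}^{*}\mathbf{A}$, the matrix $\mathbf{B}$ is HPD, its bandwidth is at most twice that of $\mathbf{A}$ (here the ``centered'' hypothesis keeps this band bookkeeping symmetric about the diagonal), its spectrum lies in $[a,b]$ with $a=\norm{\mathbf{B}^{-1}}^{-1}$ and $b=\norm{\mathbf{B}}$, and $\text{cond}(\mathbf{B})=\text{cond}(\mathbf{A})^2$. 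Since $\mathbf{A}^{-1}=\mathbf{B}^{-1}\mathbf{A}^{*}$ and $\mathbf{A}^{*}$ is itself $m$-banded, any decay estimate on the entries of $\mathbf{B}^{-1}$ transfers to $\mathbf{A}^{-1}$ up to a shift of the exponent by $m$ and a factor collecting the $O(m)$ surviving summands of the product; this is precisely what produces the factors $(m+1)$, $\lambda^{-m}$ and $\norm{\mathbf{A}^{-1}}\,\text{cond}(\mathbf{A})$ appearing in the constant $C$.

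The heart of the argument is then the HPD matrix $\mathbf{B}$. The key structural fact is that for a polynomial $p$ of degree $n$ the matrix $p(\mathbf{B})$ is banded with bandwidth at most $n$ times that of $\mathbf{B}$, so that $\{p(\mathbf{B})\}_{ij}=0$ whenever $|i-j|$ exceeds this product. I would then invoke the classical Chebyshev theory: the best uniform approximation error $E_n\triangleq\min_{\deg p\le n}\max_{x\in[a,b]}|1/x-p(x)|$ decays geometrically as $E_n\le K_0\,q^{\,n}$ with $q=(\sqrt{\text{cond}(\mathbf{B})}-1)/(\sqrt{\text{cond}(\mathbf{B})}+1)=(\text{cond}(\mathbf{A})-1)/(\text{cond}(\mathbf{A})+1)$. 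This rate is dictated by the largest Bernstein ellipse whose interior, after mapping $[a,b]$ onto $[-1,1]$, excludes the pole of $1/x$ at the origin.

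The two ingredients are glued together by the spectral theorem. Because $\mathbf{B}$ is Hermitian, functional calculus is isometric, so $\norm{\mathbf{B}^{-1}-p_n(\mathbf{B})}=\max_{x\in\sigma(\mathbf{B})}|1/x-p_n(x)|\le E_n$, and since every matrix entry is dominated by the operator norm we get $|\{\mathbf{B}^{-1}-p_n(\mathbf{B})\}_{ij}|\le E_n$ for all $i,j$. Fixing a separation $d=|i-j|$ and choosing the largest degree $n$ for which $p_n(\mathbf{B})$ still vanishes at distance $d$ from the diagonal, the term $\{p_n(\mathbf{B})\}_{ij}$ drops out and leaves $|\{\mathbf{B}^{-1}\}_{ij}|\le E_n\le K_0\,q^{\,n}\le C'\lambda^{d}$, with $\lambda$ equal to $q$ raised to the reciprocal of the bandwidth. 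Transferring back through $\mathbf{A}^{-1}=\mathbf{B}^{-1}\mathbf{A}^{*}$ and absorbing the bookkeeping constants yields the stated bound with $\lambda=((\text{cond}(\mathbf{A})-1)/(\text{cond}(\mathbf{A})+1))^{1/m}$ and the displayed $C$.

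The main obstacle I anticipate is the non-normality of a general $\mathbf{A}$: for a non-normal matrix one cannot pass from $\max_{x}|1/x-p(x)|$ to $\norm{\mathbf{A}^{-1}-p(\mathbf{A})}$, because $\norm{p(\mathbf{A})}$ is not controlled by $\max_{x\in\sigma(\mathbf{A})}|p(x)|$. This is exactly why the reduction to $\mathbf{B}=\mathbf{A}^{*}\mathbf{A}$ is indispensable; it is also the step that degrades the rate (trading $\sqrt{\text{cond}(\mathbf{A})}$-type factors for $\text{cond}(\mathbf{A})$) and inflates the constant. The remaining work is purely quantitative: pinning down $K_0$ in the Chebyshev estimate and tracking how the shift by the bandwidth of $\mathbf{A}^{*}$ together with the $O(m)$ surviving summands assemble into the explicit form of $C$.
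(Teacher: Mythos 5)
Your proposal is correct and is essentially the paper's own proof: the paper states this result without proof, importing it from \cite{Demko1984}, and your argument is precisely the original Demko--Moss--Smith route --- reduce to the Hermitian positive definite matrix $\mathbf{A}^{*}\mathbf{A}$ (with $\text{cond}(\mathbf{A}^{*}\mathbf{A})=\text{cond}(\mathbf{A})^2$, which is exactly what turns the Chebyshev rate $(\sqrt{\kappa}-1)/(\sqrt{\kappa}+1)$ into $(\text{cond}(\mathbf{A})-1)/(\text{cond}(\mathbf{A})+1)$), use geometric Chebyshev approximation of $1/x$ on $[\,\norm{(\mathbf{A}^{*}\mathbf{A})^{-1}}^{-1},\norm{\mathbf{A}^{*}\mathbf{A}}\,]$ together with the bandedness of $p_n(\mathbf{A}^{*}\mathbf{A})$ and the spectral theorem, then transfer via $\mathbf{A}^{-1}=(\mathbf{A}^{*}\mathbf{A})^{-1}\mathbf{A}^{*}$, which is where the factors $(m+1)$, $\lambda^{-m}$ and $\norm{\mathbf{A}^{-1}}\,\text{cond}(\mathbf{A})$ in $C$ arise. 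Your identification of non-normality as the reason the detour through $\mathbf{A}^{*}\mathbf{A}$ is unavoidable, and of the resulting degradation of the rate, matches the cited source; only the explicit constant-tracking (the Chebyshev constant and the band bookkeeping under the ``centered'' convention) is left implicit, as you acknowledge.
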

This bound can be used to determine a reduced CSI allocation similarly to previous done. Due to the dependency in the condition number, the bound has to be recalculated for each channel realization but this would also be the case in the previous analysis if no asumption was being made on the value of the attenuation factor $\mu$.

The exponential decay of the amplitude of the channel inverse is proven in Theorem~\ref{theorem_Demko} for arbitrary matrices. This property is the basis for our approach such that the decaying CSI allocation can be extended to banded matrices. For brievity concerns, this is not further discussed.

\subsection{Simulations for the Wyner Model}\label{se:Wyner:Sim}
We simulate the average sum rate in the Wyner model presented at the beginning of Section~\ref{se:Wyner} for a network made of $25$ cooperating TXs applying distributively a ZF precoder as given in \eqref{eq:SM_4}. For comparison, we show the performance achieved when \emph{perfect CSI} is available at all TXs. To evaluate the efficiency of the \emph{Decaying CSI allocation} proposed in Theorem~\ref{theorem_CSI_alloc_Wyner}, we also compare it to two alternative CSI allocations using the same total number of feedback bits: the \emph{uniform CSI allocation} and the \emph{overlapping cluster} allocation. Clearly, the first one consists in allocating the bits uniformly. The second one consists in an improved clustering solution where TX~$j$ has the knowledge of the channel vector $\bm{h}_i^{\He}$ for $i\in\{j-n_{\text{cluster}},\ldots,j+n_{\text{cluster}}\}$ with the number of bits uniformly allocated to all TXs and channels. In the simulations, we consider $n_{\text{cluster}}=2$, such that the number of cooperating $TX$ is $5$. This allocation has been used to improve on the usual clustering and is itself already a novel CSI allocation. Note that this allocation will be seen to perform quite badly in the Wyner model, but to have very good performance in the more realistic model discussed in Section~\ref{se:ExpDec}
\begin{figure}
\centering
\includegraphics[width=1\columnwidth]{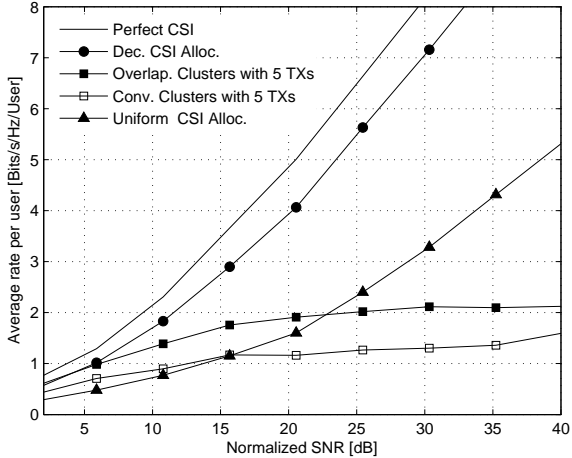}
\caption{Average rate per user in terms of the normalized transmit power per TX for an inter-cell attenuation factor $\mu=0.5$.}
\label{Rate_Wyner_25TX_musqrt05}
\end{figure}

In Fig.~\ref{R_Wyner_25TX_Musqrt05_VarPow}, we plot the average rate per user as a function of the normalized transmit power for $\mu=0.5$. The \emph{Decaying CSI allocation} outperforms clearly the \emph{overlapping cluster} solution. In fact, the decaying allocation could be shown to perform better than the overlapping cluster allocations for all the size of the clusters and is somehow approximated by an overlapping cluster allocation with the size changing in terms of the SNR.

The uniform CSI allocation becomes efficient only at very high SNR, when the pathloss does not any longer have a strong impact compared to the SNR, while the decaying CSI allocation performs well at all SNRs. The percentage of CSI used in the decaying CSI allocation compared to the full CSI allocation in \eqref{eq:Wyner_5} increases as the SNR increases. Indeed, the intercell attenuation factor $\mu$ is fixed, such that the CSI allocation converges to the uniform CSI allocation at high SNR. For a transmit power per TX of $20$ dB with $\mu=0.5$ it is equal to $0.43$ and to $0.63$ for a transmit power of $45$ dB. Furthermore, the percentage of CSI allocated also clearly decreases when the value of $\mu$ decreases such that the solution proposes becomes even more interesting.

\begin{figure}
\centering
\includegraphics[width=1\columnwidth]{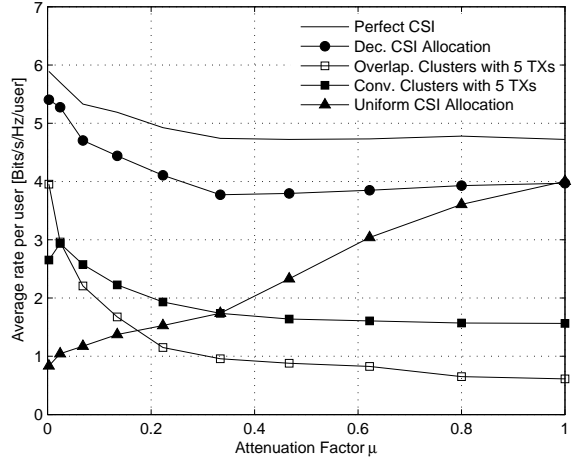}
\caption{Average rate per user in terms of the inter-cell attenuation factor $\mu$ for a normalized transmit power per TX equal to $20$ dB.}
\label{R_Wyner_25TX_P30dB_VarMu}
\end{figure}
In Fig.~\ref{R_Wyner_25TX_P30dB_VarMu}, we have plotted the average rate per user as a function of the inter-cell attenuation factor $\mu$ for a given normalized transmit power per TX equal to $20$~dB. As the attenuation factor $\mu$ becomes larger, the difference between the perfect CSI configuration and the one with optimized CSI allocation increases but for all values of $\mu$ the optimized CSI allocation remains very performing compared to the other solutions. As expected, when $\mu=1$, there is no inter-cell attenuation and the optimized CSI allocation becomes equal to the uniform CSI allocation.


\section{CSI Allocation in Exponentially Decaying Channels}\label{se:ExpDec}


The Wyner model represents an interesting model for its simplicity but is clearly a simplistic modeling of the real transmission. One of the main limiting asumption is the limited local interaction between the TXs and we will now introduce a model which lifts this restriction. We consider thus a more realistic channel model where all the links are non-zeros but the channel matrix has an \emph{exponentially decaying} structure. This means that the amplitude of the $(i,j)$-th element of the channel matrix decreases in an exponentially fashion as a function of the difference $|i-j|$. Intuitively, this means that the weight of a coefficient descreases as it goes away from the diagonal.

In this section, we will consider for simplicity a regular model for the exponential decaying matrix. The channel matrix $\mathbf{H}$ is then made from the element-wise product of a standard Rayleigh fading channel matrix $\mathbf{G}\in\mathbb{C}^{K\times K}$ and a Toeplitz matrix generated from the vector~$[1,\mu,\mu^2,\ldots,\mu^{K-1}]$:

{\small\begin{equation}
\mathbf{H}\triangleq
\begin{bmatrix}
G_{11}&\mu G_{12}&\mu^2 G_{13}&\ldots&\mu^{K-1} G_{1K}\\
\mu G_{21}& G_{22}&\mu G_{23}&\ldots&\mu^{K-2} G_{2K}\\ 
\mu^3 G_{31}& \mu G_{32}& G_{33}&\ldots&\mu^{K-3} G_{3K}\\ 
\vdots&\vdots&\vdots&\ddots&\vdots\\
\mu^{K-1} G_{K1}&\mu^{K-2} G_{K2}&\mu^{K-3}  G_{K3}&\ldots&G_{KK}
\end{bmatrix}.
\label{eq:ExpDec_1}
\end{equation}}	

Such a channel matrix corresponds to practical transmission scenarios, e.g., from the joint transmission from a set of TXs set regularly along a road. In a first part, we will start by giving some mathematical properties of the exponetially decaying and we will then show how these results can be used to extend the approach from the previous Section to that setting.

\subsection{Mathematical Prerequisite on Exponentially Decaying Matrices}
We start by defining rigorously an exponentially decaying matrix.
\begin{definition}
An infinitely large matrix $\mathbf{A}$ is said to be \emph{exponentially decaying} if it exists a $\gamma>0$ such that
\begin{equation}
\forall \gamma'<\gamma,|\{\mathbf{A}\}_{i,j}|\leq c'(\gamma')exp(-\gamma'|i-j|)
\label{eq:ExpDec_2}
\end{equation}
\end{definition}
where $c'(\gamma')$ is some constant proper to the matrix $\mathbf{A}$. In \cite{Groechenig2006,Groechenig2010,Bickel2010}, an involved functional analysis is done to extend the particularly interesting work \cite{Jaffard1990} from which the relevant properties are recalled here.
\begin{proposition}\cite{Jaffard1990}
Let $\mathbf{A}$ be an invertible matrix exponentially decaying with a coefficient $\gamma$, then $\mathbf{A}^{-1}$ is exponentially decaying for a coefficient $\gamma'$.
\label{proposition_DecayRate}
\end{proposition}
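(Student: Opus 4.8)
The plan is to prove the statement by a conjugation argument of Combes--Thomas type, exploiting that we only need to exhibit \emph{some} positive decay rate $\gamma'$ rather than the sharp one. This lets me bypass the full Banach-algebra / spectral-invariance (noncommutative Wiener) machinery underlying the general Jaffard result and reduce everything to a single Neumann-series perturbation. Throughout I read ``invertible'' as invertible with bounded inverse on $\ell^2$, which is the natural setting since the bound in \eqref{eq:ExpDec_2} already makes $\mathbf{A}$ a bounded operator (uniformly summable rows and columns, Schur test).

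First I would fix a sub-rate $\gamma''\in(0,\gamma)$ and, for a real parameter $s$ with $|s|<\gamma''$, introduce the weighted matrix $\mathbf{A}_s$ with entries $\{\mathbf{A}_s\}_{ij}\triangleq e^{s(i-j)}\{\mathbf{A}\}_{ij}$; formally this is the similarity $\mathbf{A}_s=\mathbf{W}_s\mathbf{A}\mathbf{W}_s^{-1}$ with the diagonal weight $\mathbf{W}_s\triangleq\diag(\ldots,e^{sn},\ldots)$. Using the bound $|\{\mathbf{A}\}_{ij}|\le c'(\gamma'')e^{-\gamma''|i-j|}$ from \eqref{eq:ExpDec_2}, one sees $|\{\mathbf{A}_s\}_{ij}|\le c'(\gamma'')e^{-(\gamma''-|s|)|i-j|}$, so $\mathbf{A}_s$ again decays exponentially at the positive rate $\gamma''-|s|$ and hence defines a bounded operator on $\ell^2$.

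The key step is to show $\mathbf{A}_s$ is \emph{invertible} for small $s$. I would estimate the perturbation $\mathbf{A}_s-\mathbf{A}$, whose entries are $(e^{s(i-j)}-1)\{\mathbf{A}\}_{ij}$. From $|e^x-1|\le|x|e^{|x|}$ and the exponential bound on $\mathbf{A}$, each entry is dominated by $c'(\gamma'')|s|\,|i-j|\,e^{-(\gamma''-|s|)|i-j|}$, whose row and column sums are bounded by $|s|$ times a convergent series; the Schur test then gives $\norm{\mathbf{A}_s-\mathbf{A}}\le C|s|$ with $C$ independent of $s$ on, say, $|s|\le\gamma''/2$. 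Choosing $s_0>0$ small enough that $Cs_0<1/\norm{\mathbf{A}^{-1}}$, the Neumann series applied to $\mathbf{A}_{s_0}=\mathbf{A}(\I+\mathbf{A}^{-1}(\mathbf{A}_{s_0}-\mathbf{A}))$ shows $\mathbf{A}_{s_0}$ is invertible with $\norm{\mathbf{A}_{s_0}^{-1}}\le\norm{\mathbf{A}^{-1}}/(1-Cs_0\norm{\mathbf{A}^{-1}})$.

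Finally I would transfer the bound back. The similarity yields $\{\mathbf{A}^{-1}\}_{ij}=e^{-s_0(i-j)}\{\mathbf{A}_{s_0}^{-1}\}_{ij}$, which I would justify directly at the level of entries: since all rows and columns are $\ell^1$, the relevant matrix products converge absolutely, so $\mathbf{W}_{s_0}^{-1}\mathbf{A}_{s_0}^{-1}\mathbf{W}_{s_0}$ is a genuine two-sided inverse of $\mathbf{A}$. Because every entry is bounded by the operator norm, $|\{\mathbf{A}_{s_0}^{-1}\}_{ij}|\le\norm{\mathbf{A}_{s_0}^{-1}}$, hence $|\{\mathbf{A}^{-1}\}_{ij}|\le\norm{\mathbf{A}_{s_0}^{-1}}\,e^{-s_0(i-j)}$, giving decay for $i>j$; rerunning the identical argument with $\mathbf{W}_{-s_0}$ handles $i<j$, and combining the two gives $|\{\mathbf{A}^{-1}\}_{ij}|\le\norm{\mathbf{A}_{s_0}^{-1}}\,e^{-s_0|i-j|}$, i.e.\ $\mathbf{A}^{-1}$ is exponentially decaying with coefficient $\gamma'=s_0>0$. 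The hard part here is not any single computation but the two bookkeeping points that make the scheme rigorous: the Schur-test estimate $\norm{\mathbf{A}_s-\mathbf{A}}\le C|s|$ with a threshold tied to $\norm{\mathbf{A}^{-1}}$, and the justification that conjugation by the \emph{unbounded} weight $\mathbf{W}_s$ is legitimate, which is why I would phrase the inversion entrywise rather than as an operator identity.
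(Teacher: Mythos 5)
First, a framing point: the paper contains no proof of Proposition~\ref{proposition_DecayRate} --- it is imported verbatim from \cite{Jaffard1990} --- so your attempt can only be judged on its own merits and against the literature. Your overall strategy is the right one for this qualitative statement: a Combes--Thomas conjugation by the diagonal weight $\mathbf{W}_s$, the Schur-test estimate $\norm{\mathbf{A}_s-\mathbf{A}}\leq C|s|$, and a Neumann-series inversion of $\mathbf{A}_{s_0}$ are all correct as written, and since the proposition only claims \emph{some} $\gamma'>0$ rather than a sharp rate, bypassing the Banach-algebra/spectral-invariance machinery (which is really needed for Jaffard's polynomial-decay case) is a sound economy. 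Indeed, Jaffard's own treatment of the exponential case is close in spirit to what you do.

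The genuine gap is in the final transfer step. You assert that $B\triangleq\mathbf{W}_{s_0}^{-1}\mathbf{A}_{s_0}^{-1}\mathbf{W}_{s_0}$ is ``a genuine two-sided inverse of $\mathbf{A}$'' and read off $\{\mathbf{A}^{-1}\}_{ij}=e^{-s_0(i-j)}\{\mathbf{A}_{s_0}^{-1}\}_{ij}$. A minor inaccuracy first: the rows and columns of $\mathbf{A}_{s_0}^{-1}$ are not known to be $\ell^1$ --- boundedness on $\ell^2$ gives only $\ell^2$ --- though the entrywise identities $B\mathbf{A}=\mathbf{A}B=\I$ do hold with absolute convergence, by Cauchy--Schwarz against the exponentially decaying rows and columns of $\mathbf{A}$. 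The substantive problem is that, for infinite matrices, an entrywise two-sided inverse is \emph{not} automatically unique, so $B\mathbf{A}=\mathbf{A}B=\I$ does not yet identify $B$ with the bounded inverse $\mathbf{A}^{-1}$. The standard rescue $B=(B\mathbf{A})\mathbf{A}^{-1}=B(\mathbf{A}\mathbf{A}^{-1})=\mathbf{A}^{-1}$ requires absolute convergence of the triple product, and precisely this fails here: your only control on $B$ is $|B_{ik}|\leq\norm{\mathbf{A}_{s_0}^{-1}}\,e^{-s_0(i-k)}$, which \emph{grows} like $e^{s_0 k}$ for $k>i$, while the inner sums $\sum_{l}|A_{kl}|\,|\{\mathbf{A}^{-1}\}_{lj}|$ are only $O(1)$ in $k$, so the sum over $k$ diverges (certainly for $\mathbb{Z}$-indexed matrices). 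Rerunning the argument with $-s_0$, as you propose, produces a second candidate inverse with the opposite one-sided bound, but does not by itself show that either candidate equals $\mathbf{A}^{-1}$.

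The fix is cheap and stays inside your framework. For purely imaginary $s=\mathrm{i}t$ the weight $\mathbf{W}_{\mathrm{i}t}$ is \emph{unitary}, so $\mathbf{A}_{\mathrm{i}t}^{-1}=\mathbf{W}_{\mathrm{i}t}\mathbf{A}^{-1}\mathbf{W}_{\mathrm{i}t}^{-1}$ is a legitimate bounded-operator identity and $\{\mathbf{A}_{\mathrm{i}t}^{-1}\}_{ij}=e^{\mathrm{i}t(i-j)}\{\mathbf{A}^{-1}\}_{ij}$ holds rigorously. Your bound $\norm{\mathbf{A}_s-\mathbf{A}}\leq C|s|$ goes through verbatim for complex $s$, so $s\mapsto\mathbf{A}_s$ is a norm-analytic family, invertible on the disk $|s|\leq s_0$, and each entry $s\mapsto\{\mathbf{A}_s^{-1}\}_{ij}$ is analytic there; since it agrees with $e^{s(i-j)}\{\mathbf{A}^{-1}\}_{ij}$ on the imaginary axis, analytic continuation yields the desired identity at $s=\pm s_0$, after which your concluding estimate $|\{\mathbf{A}^{-1}\}_{ij}|\leq\norm{\mathbf{A}_{\pm s_0}^{-1}}\,e^{-s_0|i-j|}$ stands. (For $\mathbb{N}$-indexed matrices one can instead show the two candidates coincide by Fubini, since $\sum_{k,l\geq 1}e^{-s_0(k+l)-\gamma''|k-l|}<\infty$, and then invoke uniqueness of the bounded inverse via the Schur test; but the analytic-continuation route is cleaner and independent of the index set.) With this patch your proof is complete and correct.
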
 
This proposition allows to obtain information on the structure of the channel inverse, which will be our ZF precoder, based on the structure of the channel. It implies that the set of exponentially decaying matrices is closed under inversion. Note however, that the constant $\gamma'$ is a priori different from $\gamma$, but we will in the following assume the constrant $\gamma'$ to be known. As this constant depends only on the pathloss attenuation, it appears a realistic assumption that a long term channel estimate is inverted so as to estimate the value of the decay rate for the inverse $\gamma'$.

\begin{proposition}\cite{Jaffard1990}
Let $\mathbf{A}$ and $\mathbf{B}$ be two exponentially decaying matrices for a coefficient $\gamma$ and let $\Omega$ be any set of indices. If
\begin{equation}
|\{\mathbf{A}\}_{i,j}-\{\mathbf{B}\}_{i,j}|\leq c\exp(-\gamma(d(i,\Omega)+d(j,\Omega)))
\label{eq:ExpDec_3}
\end{equation}
where $c$ is some arbitrary constant and the distance $d(\cdot,\Omega)$ denotes the minimal distance to an element in the set $\Omega$. It holds then
\begin{equation}
|\{\mathbf{A}^{-1}\}_{i,j}\!-\!\{\mathbf{B}^{-1}\}_{i,j}|\leq c'\exp(-\gamma'(d(i,\Omega)+d(j,\Omega)))
\label{eq:ExpDec_4}
\end{equation}
\label{proposition_window}
\end{proposition}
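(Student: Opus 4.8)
The plan is to reduce the statement to a perturbation identity for inverses and then to propagate the localized decay of the difference $\mathbf{B}-\mathbf{A}$ through two convolutions against exponentially decaying inverse matrices. The starting point is the resolvent-type identity
\[
\mathbf{A}^{-1}-\mathbf{B}^{-1}=\mathbf{A}^{-1}(\mathbf{B}-\mathbf{A})\mathbf{B}^{-1},
\]
which holds entrywise because the exponential decay of all factors guarantees absolute convergence of the relevant sums (this is exactly the setting in which the functional-analytic framework recalled above applies). Reading off the $(i,j)$ entry gives
\[
\{\mathbf{A}^{-1}-\mathbf{B}^{-1}\}_{ij}=\sum_{k,l}\{\mathbf{A}^{-1}\}_{ik}\,\{\mathbf{B}-\mathbf{A}\}_{kl}\,\{\mathbf{B}^{-1}\}_{lj}.
\]

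Next I would insert the available bounds. By Proposition~\ref{proposition_DecayRate} both $\mathbf{A}^{-1}$ and $\mathbf{B}^{-1}$ are exponentially decaying, say $|\{\mathbf{A}^{-1}\}_{ik}|\le a\exp(-\tilde\gamma|i-k|)$ and $|\{\mathbf{B}^{-1}\}_{lj}|\le b\exp(-\tilde\gamma|l-j|)$ for some $\tilde\gamma>0$, while the hypothesis \eqref{eq:ExpDec_3} furnishes $|\{\mathbf{B}-\mathbf{A}\}_{kl}|\le c\exp(-\gamma(d(k,\Omega)+d(l,\Omega)))$. Because the summand factors into a function of $(i,k)$ times a function of $(l,j)$, the double sum splits as a product of two single sums of the form $\sum_k \exp(-\tilde\gamma|i-k|)\exp(-\gamma d(k,\Omega))$ and its analogue in $l$.

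The crux of the argument is the geometric estimate that turns each such convolution into the desired $d(i,\Omega)$-decay. Picking a nearest point $k^\star\in\Omega$ to $k$, one has $d(i,\Omega)\le|i-k^\star|\le|i-k|+d(k,\Omega)$. Fixing any $\gamma'$ with $0<\gamma'<\min(\gamma,\tilde\gamma)$ and spending a fraction $\gamma'$ of the two decay rates on this inequality yields $\tilde\gamma|i-k|+\gamma d(k,\Omega)\ge \gamma' d(i,\Omega)+(\tilde\gamma-\gamma')|i-k|$, so that
\[
\sum_k \exp(-\tilde\gamma|i-k|)\exp(-\gamma d(k,\Omega))\le \exp(-\gamma' d(i,\Omega))\sum_k\exp(-(\tilde\gamma-\gamma')|i-k|).
\]
The residual geometric sum converges to a constant independent of $i$ because $\tilde\gamma-\gamma'>0$. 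Applying the same bound to the $l$-sum with $d(j,\Omega)$ and multiplying the two factors gives $|\{\mathbf{A}^{-1}-\mathbf{B}^{-1}\}_{ij}|\le c'\exp(-\gamma'(d(i,\Omega)+d(j,\Omega)))$, which is exactly \eqref{eq:ExpDec_4}.

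I expect the main obstacle to be precisely this \emph{decay-splitting} step: one must choose $\gamma'$ strictly below both $\gamma$ and the inverse-decay rate $\tilde\gamma$ so that enough exponential decay in $|i-k|$ survives to make the residual series summable, while still extracting the full $d(i,\Omega)$-weight from the triangle inequality. This is also the reason the output rate $\gamma'$ is genuinely smaller than the input rate $\gamma$, consistent with the remark following Proposition~\ref{proposition_DecayRate}. A secondary technical point, which the Jaffard framework handles, is justifying the entrywise manipulation of the infinite matrix products; the exponential off-diagonal decay makes every sum absolutely convergent, so the Fubini-type rearrangements used above are legitimate.
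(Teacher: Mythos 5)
Your proposal is correct, and it is worth noting at the outset that the paper itself offers no proof of this proposition: it is imported verbatim from the cited work of Jaffard (the ``Windows Lemma''), so there is no in-paper argument to compare against. What you have reconstructed is essentially the classical proof from that literature: the resolvent identity $\mathbf{A}^{-1}-\mathbf{B}^{-1}=\mathbf{A}^{-1}(\mathbf{B}-\mathbf{A})\mathbf{B}^{-1}$, Proposition~\ref{proposition_DecayRate} to give exponential decay of $\mathbf{A}^{-1}$ and $\mathbf{B}^{-1}$ at some rate $\tilde\gamma>0$, and the convolution estimate that absorbs $\exp(-\gamma d(k,\Omega))$ into $\exp(-\gamma' d(i,\Omega))$. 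The key steps check out: the hypothesis \eqref{eq:ExpDec_3} factors as $\exp(-\gamma d(k,\Omega))\exp(-\gamma d(l,\Omega))$, so the double sum over $(k,l)$ genuinely splits into a product of two single sums; the triangle inequality $d(i,\Omega)\leq |i-k|+d(k,\Omega)$ is the correct geometric input; and your decay-splitting inequality $\tilde\gamma|i-k|+\gamma d(k,\Omega)\geq \gamma' d(i,\Omega)+(\tilde\gamma-\gamma')|i-k|$ for $0<\gamma'<\min(\gamma,\tilde\gamma)$ is verified directly, leaving a summable residual series with a constant independent of $i$. This also correctly explains why the conclusion \eqref{eq:ExpDec_4} holds only for a degraded rate $\gamma'<\gamma$, consistent with the rate loss already present in Proposition~\ref{proposition_DecayRate}. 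Two minor caveats, neither fatal: you (like the proposition as stated) implicitly assume $\mathbf{A}$ and $\mathbf{B}$ are boundedly invertible, which is needed both for the resolvent identity as an operator identity and for Jaffard's inverse-decay theorem to apply; and absolute convergence of the double sum should be noted explicitly --- it follows since $\exp(-\gamma d(k,\Omega))\leq 1$ and $\sum_k \exp(-\tilde\gamma|i-k|)<\infty$, which is exactly the Fubini justification you gesture at in your closing remark.
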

where $c'$ is another constant. This proposition which was called in \cite{Jaffard1990} the \emph{Windows Lemma} can be in fact very intuitively (and linked to its name). Due to the exponentially decaying structure of the channel, perturbation of one element of the matrix to invert $\mathbf{A}$ has only an impact on the coefficients the inverse $\mathbf{A}^ {-1}$ located around the element considered. This comes from the exponentially decaying structure and has for consequence that only local knowledge of the matrix $\mathbf{A}$ is needed to obtain an arbitrary good approximation. 

This property can be put in relation with another useful Theorem which states that an exponetially decaying matrix can be approximated arbiratry well by band limited matrices\cite{Bickel2010}.

We will now show how it is possible to use this property of \emph{local inversibility} to derive a CSI alloction which is more \emph{stingy} in how the CSI is allocated. Thus, the CSI allocation is reduced very strongly so as to fulfill only the most necessary needs.

\subsection{CSI Allocation in Exponentially Decaying Channels}
We consider now the exponentially decaying channel matrix defined in \eqref{eq:ExpDec_1}. We discuss the MG achieved in that case when the $\mu$ tends to zero as $P$ tends to infinity. As already explained for the Wyner model, this is only an artefact to model the pathloss attenuation and at finite SNR, both are finite and we will see by simulations that the approach is efficient in realistic conditions.

Before stating the main result, we start by applying Proposition~\ref{proposition_DecayRate} to the channel matrix $\mathbf{H}$ which is observed to decay with the rate $\gamma=-\log(\mu)$. Thus, we conclude from the proposition that the channel inverse $\mathbf{H}^{-1}$ is exponentially decaying with the decay rate $\gamma'$. By similarity to the structure of the channel matrix, we define $\mu'\triangleq \exp(-\gamma')\in(0,\mu]$.

\begin{theorem}
Considering that $\mu$ tends to zero as $P$ tends to infinity, a sufficient CSI allocation to achieve the maximal MG using RVQ is to quantize the channel vectors $\bm{h}_i^{\He}$ at TX~$j$ with a number of bits equal to:
\begin{equation}
B_i^{(j)}=(K-1)\max\left(\log_2\left(\frac{P}{\mu'^2}\left(\frac{\mu'^{2|i-j|}}{\mu^{2|i-j|}}\right)\right),0\right).
\label{eq:ExpDec_5}
\end{equation} 
 If a uniform scalar quantization is used, and we define the number of bits quantizing $H_{ik}^{(j)}$ by $B_{ik}^{(j)}$, then it is sufficient to achieve the maximal MG to let
\begin{equation}
B_{ik}^{(j)}=\max(\log_2\left(\left(\frac{P}{\mu'^2}\left(\frac{\mu'^{2|i-j|}}{\mu^{2|i-j|}}\right)\mu^{2|k-j|}\right),0\right).
\label{eq:ExpDec_6}
\end{equation}
\label{theorem_CSI_alloc_exp}
\end{theorem}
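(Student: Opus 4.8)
The plan is to mirror the structure of the proof of Theorem~\ref{theorem_CSI_alloc_Wyner}, replacing the explicit tridiagonal inverse of Subsection~\ref{se:Wyner:Inv} by the two structural facts now available for exponentially decaying matrices: Proposition~\ref{proposition_DecayRate}, which yields $|\{\mathbf{H}^{-1}\}_{ij}|\lesssim \mu'^{|i-j|}$, and the Windows Lemma (Proposition~\ref{proposition_window}), which localises the effect of a channel perturbation on the inverse. First I would use Proposition~\ref{proposition_DecayRate} to fix the power scaling of the coefficient actually used at TX~$j$ for stream~$i$: since $\bm{t}_i^{(j)}=\sqrt{P}(\mathbf{H}^{(j)})^{-1}\bm{e}_i/\|(\mathbf{H}^{(j)})^{-1}\bm{e}_i\|$ and the $i$-th column of the inverse has $O(1)$ norm concentrated on its diagonal entry, the used entry satisfies $|T_{ji}|\sim\sqrt{P}\mu'^{|i-j|}$, so TX~$j$ spends power $\sim P\mu'^{2|i-j|}$ on stream~$i$. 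This is the exact analogue of the $O(P\mu^{2(i-j)})$ statement of the Wyner proof and is what allows the CSI to be coarsened with distance. The theorem only asks for a \emph{sufficient} allocation, so the task reduces to verifying that \eqref{eq:ExpDec_5}--\eqref{eq:ExpDec_6} keep the relevant error quantity bounded.

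The core is the MG condition \eqref{eq:Wyner_5} of Theorem~\ref{theorem_MG}, which I would read through the residual interference rather than through $\E[\|\bm{t}_i-\bm{t}_i^{\text{PCSI}}\|^2]$ directly, since the distributed nature of the CSI is cleanest there. With consistent perfect CSI the identity $\sum_j H_{\ell j}T_{ji}^{\text{PCSI}}=0$ holds for $\ell\neq i$, so the interference leaking into RX~$\ell$ from stream~$i$ is $I_{\ell i}=\sum_j H_{\ell j}\,\Delta T_{ji}^{(j)}$, and the maximal MG is retained as long as $\E[|I_{\ell i}|^2]=O(1)$. Expanding to first order, $\Delta(\mathbf{H}^{(j)})^{-1}=-\mathbf{H}^{-1}\Delta\mathbf{H}^{(j)}\mathbf{H}^{-1}$ (legitimate because the allocation will be shown to make the errors small, exactly as in \eqref{eq:Wyner_8}), gives $\Delta T_{ji}^{(j)}=-\sqrt{P}\sum_{a,b}\{\mathbf{H}^{-1}\}_{ja}\Delta H_{ab}^{(j)}\{\mathbf{H}^{-1}\}_{bi}$; the per-column normalisation contributes only a first-order error along $\bm{t}_i$ itself and not to the cross-interference, so it does not tighten the requirement, just as in the closing step of the Wyner proof. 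The decisive point, and the reason the exact cancellation of the centralised case fails, is that the errors $\Delta H_{ab}^{(j)}$ carry the TX index~$j$: the identity $\sum_j H_{\ell j}\{\mathbf{H}^{-1}\}_{ja}=\delta_{\ell a}$ (the Kronecker delta) can no longer be used to telescope, and instead the independent per-TX errors add in variance.

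I would then substitute the decay bounds and compute. Using $|H_{\ell j}|\sim\mu^{|\ell-j|}$, $|\{\mathbf{H}^{-1}\}_{ja}|\sim\mu'^{|j-a|}$, $|\{\mathbf{H}^{-1}\}_{bi}|\sim\mu'^{|b-i|}$ and the independence of the quantization errors, $\E[|I_{\ell i}|^2]\approx P\sum_{j,a,b}\mu^{2|\ell-j|}\mu'^{2|j-a|}\mu'^{2|b-i|}\,\E[|\Delta H_{ab}^{(j)}|^2]$. For the scalar case I would model a $B_{ab}^{(j)}$-bit uniform quantizer of the entry $H_{ab}$ (of magnitude $\mu^{|a-b|}$) by $\E[|\Delta H_{ab}^{(j)}|^2]\sim\mu^{2|a-b|}2^{-B_{ab}^{(j)}}$; inserting the candidate \eqref{eq:ExpDec_6}, the factors $\mu'^{2|j-a|}$ cancel against the $\mu'^{-2|a-j|}$ carried by the allocation, the surviving $\mu$-powers telescope through the triangle inequality over the free indices $a,b$, and the geometric sums over $j$, $a$, $b$ are each $O(1)$, leaving $\E[|I_{\ell i}|^2]=O(1)$. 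For the RVQ case the only change is that the error is isotropic over the row, so per entry $\E[|\Delta H_{ab}^{(j)}|^2]\approx(\delta_a^{(j)})^2/(K-1)$ where the relative error obeys the standard law $(\delta_a^{(j)})^2=2^{-B_a^{(j)}/(K-1)}$; it is this $(K-1)$-dimensional direction quantization that produces the $(K-1)$ prefactor in \eqref{eq:ExpDec_5} (as in Theorem~\ref{theorem_MG}), and the same cancellation then forces $(\delta_i^{(j)})^2\sim(\mu'^2/P)(\mu/\mu')^{2|i-j|}$, i.e. \eqref{eq:ExpDec_5}. Finally, Proposition~\ref{proposition_window} justifies the $\max(\cdot,0)$ truncation: once $|i-j|$ is large enough that the nominal bit count turns negative, the Windows Lemma guarantees that simply discarding that part of the CSI perturbs the relevant inverse entries only by an exponentially small amount, so each TX needs to know only finitely many rows.

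The hardest part will be the bookkeeping of the multi-index sum $\sum_{j,a,b}$ with two distinct decay rates $\mu$ and $\mu'$ interacting: showing that, after inserting the allocation, every residual geometric series is summable and independent of the indices $\ell,i$ fixed by the interference constraint. Two subtleties deserve care. First, the self-consistency of the first-order expansion must be checked against the very accuracy the allocation delivers, as in the Wyner argument. Second, and more delicate, is pinning down the RVQ error model: RVQ quantizes the \emph{direction} of the whole row, so its error is spread isotropically and does \emph{not} respect the exponential taper of the entries; this is precisely why the RVQ allocation \eqref{eq:ExpDec_5} is uniform across a row (only $|i-j|$ enters) whereas the scalar allocation \eqref{eq:ExpDec_6} can additionally exploit the per-entry decay through the factor $\mu^{2|k-j|}$. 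Making that dichotomy rigorous, rather than heuristic, is where Proposition~\ref{proposition_window} and the constants hidden in Proposition~\ref{proposition_DecayRate} must be handled with care, though none of this affects the multiplexing gain, which is insensitive to the polynomial and constant factors suppressed throughout.
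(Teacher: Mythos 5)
Your argument is correct in outline and does recover both allocations, but it takes a genuinely different route from the paper. You run a direct first-order perturbation analysis: expand $\Delta(\mathbf{H}^{(j)})^{-1}=-\mathbf{H}^{-1}\Delta\mathbf{H}^{(j)}\mathbf{H}^{-1}$, write the residual interference $I_{\ell i}=\sum_j H_{\ell j}\Delta T^{(j)}_{ji}$, insert the decay bounds $|H_{\ell j}|\sim\mu^{|\ell-j|}$ and $|\{\mathbf{H}^{-1}\}_{ja}|\sim\mu'^{|j-a|}$ from Proposition~\ref{proposition_DecayRate}, and verify summability of the triple sum after plugging in the candidate bit counts, invoking Proposition~\ref{proposition_window} only to justify the $\max(\cdot,0)$ truncation. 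The paper inverts this logic and makes the Windows Lemma the engine of the entire proof: it fixes $K_0'$ as the smallest integer with $\mu'^{K_0'}\le 1/P$, takes $\Omega$ to be all rows at distance at least $K_0'$ from row~$j$, and \emph{imposes} the hypothesis \eqref{eq:ExpDec_3} of Proposition~\ref{proposition_window} on the quantization errors (error of order $\mu^{2(K_0'-|\ell-j|)}$ on row~$\ell$ at TX~$j$, whence the truncation comes for free); the lemma's conclusion \eqref{eq:ExpDec_4} then immediately bounds the deviation of the $j$-th row of the locally computed inverse by $O(\mu'^{K_0'})\le 1/P$, which after per-stream normalization is exactly condition \eqref{eq:Wyner_5} of Theorem~\ref{theorem_MG}. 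The bit counts \eqref{eq:ExpDec_5}--\eqref{eq:ExpDec_6} are then obtained by converting these per-row error targets via the standard RVQ law (hence the $(K-1)$ prefactor) and, for scalar quantization, the per-entry variance $\mu^{2|k-j|}$, with $K_0'$ replaced by $K_0'-|\ell-j|$ for the other rows. The trade-off: the paper's route encapsulates precisely what you identify as the hardest part --- the two-rate $(\mu,\mu')$ multi-index bookkeeping, including the polynomial factors arising when the summation index lies between $b$ and $j$ so the triangle-inequality slack vanishes --- inside the stability statement of the Windows Lemma, and it bounds the figure of merit $\E[\norm{\bm{t}_i-\bm{t}_i^{\text{PCSI}}}^2]$ directly; your route is more elementary and self-contained (it needs only the decay of the inverse, not the perturbation-stability lemma as a black box) and makes the error accounting explicit, at the price of the summability and independence arguments you correctly flag as delicate.

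One caveat on a specific step: your claim that the per-stream normalization ``contributes only a first-order error along $\bm{t}_i$ itself and not to the cross-interference'' is not quite right in the distributed setting. Each TX~$j$ normalizes by its \emph{own} estimate of $\norm{(\mathbf{H}^{(j)})^{-1}\bm{e}_i}$, so the multiplicative errors are TX-dependent; the resulting precoder deviation is not proportional to $\bm{t}_i^{\text{PCSI}}$ and does leak interference. The allocation happens to cover it --- the relative norm error at TX~$j$ has variance of order $(\mu'^2/P)(\mu/\mu')^{2|i-j|}$, which against the contribution $P\mu^{2|\ell-j|}\mu'^{2|i-j|}$ sums to $O(1)$ --- and the paper's Wyner proof checks the analogous point explicitly (via the accuracy on $d_i$) rather than dismissing it; your exponentially-decaying proof should do the same. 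None of this changes your conclusion, but as written that step is a gap in rigor rather than a harmless remark.
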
 
\begin{proof}
We first notice that the channel matrix $\mathbf{H}$ defined in \eqref{eq:ExpDec_1} is exponentially decaying with decay constant $\gamma=-\log(\mu)$. The proof will be based on the use of the ``Windows Lemma''~\ref{proposition_window} with the two matrices $\mathbf{H}$ and $\mathbf{H}^{(j)}$. We first notice that it exists a natural number $k'$ such that $\mu'^{k'}<1/P$ with $\mu'\triangleq \exp(-\gamma')$ defined from $\gamma'$ the decay constant for the channel inverse $\mathbf{H}^{-1}$. We define $K_0'$ as the smallest natural number verifying this property which means that
\begin{equation} 
K_0'=\left\lceil \frac{-\log(P)}{\log(\mu)}\right\rceil.
\label{eq:ExpDec_7}
\end{equation} 
Thus, we $K'_0$ fulfills the relation
\begin{equation} 
K_0'-1\leq \frac{-\log(P)}{\log(\mu)}\leq K_0' 
\label{eq:ExpDec_8}
\end{equation} 
which we can rewrite as
\begin{equation} 
\mu'^{K_0'-1}\geq \frac{1}{P}\geq \mu'^{K_0'}. 
\label{eq:ExpDec_9}	
\end{equation} 

We also define the set $\Omega$ as 
{\small\begin{equation}
\Omega\triangleq\{(j+K'_0+\ell,1),\ldots,(j+K'_0+\ell,K),(j-K'_0-\ell,1),\ldots,(j-K'_0-\ell|\forall \ell\in \mathbb{N}\}.
\label{eq:ExpDec_10}
\end{equation}}
In fact, the set $\Omega$ is simply made of all the rows such that there are at least $K'_0-1$ rows between them and the $j$-th row. 

Let consider now that the CSI is allocated so as to fulfill the condition on the channels in Proposition~\ref{proposition_window} with this choice of $\Omega$ and the matrices $\mathbf{H}^{(j)}$ and $\mathbf{H}$. The difference between the $j$-th row of the inverse of $\mathbf{H}^{(j)}$ and the inverse$\mathbf{H}$ is known at TX~$j$ with an error of the order $O(\mu'^{K_0'})$.  Using \eqref{eq:ExpDec} we can further bound it to get
\begin{equation} 
\mu'^{K_0'}\leq \frac{1}{P}. 
\label{eq:ExpDec_11}	
\end{equation} 
In fact, only the $j$-th row of the precoder is used at TX~$j$ such that after a stream by stream normalization, the condition given in \eqref{eq:Wyner_5} is fulfilled and the maximum MG is achieved.

Therefore, we need to prove that the conditions for the channel matrices is fullfilled to conclude the proof. Let consider in a first step the requirement for the CSI on the channel vector $\bm{h}_j$ at TX~$j$. To fullfill the condition of the theorem, the error made on the estimation of the channel vector $\bm{h}_j$ should not be larger than a $O(\mu^{2K'_0})$. Using \eqref{eq:ExpDec_9}, we can then lower bound as
\begin{equation} 
\mu^{2K'_0}=\frac{\mu^{2K'_0}}{\mu'^{2K'_0}}\mu'^{2K'_0}\geq\frac{\mu^{2K'_0}}{\mu'^{2K'_0}}\mu'\frac{1}{P}. 
\label{eq:ExpDec_12}
\end{equation}
To have the estimation smaller than the RHS, it is necesssary to have a number of bits equal to 
\begin{equation} 
B=\left\lceil(K-1)\log_2\left(\frac{\mu^{2K'_0}}{\mu'^{2K'_0}}\mu'\frac{1}{P}\right)\right\rceil 
\label{eq:ExpDec_13}
\end{equation}
when RVQ is used to encode $\bm{h}_j$. The expression in \eqref{eq:ExpDec_13} is exactly the formula given in the Theorem, and the expressions for the other channel vectors $\bm{h}_{\ell}$ with $\ell\neq j$ follow exactly on the same way, only with $K'_0$ replaced by $K'_0-|\ell-j|$. This concludes the proof of \eqref{eq:ExpDec_5}.

The start of the proof of \eqref{eq:ExpDec_6} is exactly the same. The only difference is that to obtain an estimation error of the order $O(\mu^{2K'_0}\mu'/(\mu'^{2K'_0}P)$, less bits are necessart when using scalar quantization with the knowledge of the variance of the coefficients. Let consider without loss of generality the quantization of $\bm{h}_j$. Then, $H_{ji}$ has a variance equal to $\mu^{2|i-j|}$. Thus, only $\log_2(\mu^{2|i-j|})$ less bits are needed to obtain the same accuracy as for the direct element $H_{jj}$. Using this result in \eqref{eq:ExpDec_5} gives \eqref{eq:ExpDec_6}.
\end{proof}
The CSI allocation \eqref{eq:ExpDec_6} is based on the same property of the precoder as in the Wyner model that the off-diagonal elements in the inverse are exponentially decaying. The modified CSI allocation \eqref{eq:ExpDec_6} makes use of the knowledge of the variance of the coefficients inside the channel vectors so that a scalar with a smaller variance is quantized with less bits.

The constant $\mu'$ is not known beforehand but can realistically be estimated based on long term information. Furthermore, we have the strong belief that it is in fact equal to $\mu$. The CSI allocation derived requires much less bits than the conventional one for the BC. Particularly, it also consists in limiting the cooperation to a \emph{local scale} which leads to the very good scaling of the total number of bits when the number of cooperating TXs increases.
\begin{proposition}
Allocating the CSI according to~\eqref{eq:ExpDec_5} leads to a quadratic scaling of the total number of feedback bits in $K$, while using the CSI allocation \eqref{eq:ExpDec_6} leads to a linear scaling, which means that the number of CSI bits by TX does not increases with $K$.
\label{prop_CSI_alloc_Wyner}
\end{proposition}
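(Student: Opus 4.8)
The plan is to mirror the argument of Proposition~\ref{proposition_scaling_wyner}, the key observation being that in both allocations of Theorem~\ref{theorem_CSI_alloc_exp} the per-pair bit count is positive only inside a window around the diagonal whose width is controlled by $P$, $\mu$ and $\mu'$ but \emph{not} by $K$. Fixing $P$, $\mu$ and $\mu'$ and taking $K$ large (and odd, so that a TX~$j$ sitting in the middle of the cooperation domain can be treated without edge effects), I would compute $B^{(j)}=\sum_i B_i^{(j)}$ for \eqref{eq:ExpDec_5} and $B^{(j)}=\sum_{i,k}B_{ik}^{(j)}$ for \eqref{eq:ExpDec_6}, and then sum over the $K$ transmitters. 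Since the budget of an edge TX is obtained from that of the central TX by discarding terms, bounding the central TX suffices.

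For \eqref{eq:ExpDec_5} I would write the argument of the logarithm as $\tfrac{P}{\mu'^2}\left(\tfrac{\mu'}{\mu}\right)^{2|i-j|}$ and note that its base-$2$ logarithm is the affine function $\log_2\!\left(P/\mu'^2\right)+2|i-j|\log_2(\mu'/\mu)$ of the distance $|i-j|$. Because $\mu'\le\mu<1$ we have $\log_2(\mu'/\mu)\le 0$, so this quantity is non-increasing in $|i-j|$ and turns negative once $|i-j|$ exceeds a threshold $D=O\!\left(\log_2(P/\mu'^2)/\log_2(\mu/\mu')\right)$ that is independent of $K$. Hence at most $2D+1$ terms of the sum over $i$ are positive, each bounded by $(K-1)\log_2(P/\mu'^2)$, which gives $B^{(j)}=O(K)$ for every $j$ and therefore $\sum_{j=1}^K B^{(j)}=O(K^2)$: the announced quadratic scaling.

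For \eqref{eq:ExpDec_6} the same computation applies, but now the log-argument carries the extra factor $\mu^{2|k-j|}$, whose logarithm $2|k-j|\log_2\mu$ is again non-increasing in $|k-j|$ because $\mu<1$. The set of index pairs $(i,k)$ for which $B_{ik}^{(j)}>0$ is thus confined to a box around $(j,j)$ whose side lengths are bounded uniformly in $K$, so the double sum has only $O(1)$ positive terms and $B^{(j)}=O(1)$. Summing over the $K$ transmitters yields $\sum_{j}B^{(j)}=O(K)$, the linear scaling, and in particular the per-TX budget $B^{(j)}$ does not grow with $K$. The gain of one factor of $K$ relative to \eqref{eq:ExpDec_5} is exactly the replacement of the $(K-1)$-bit RVQ of a full $K$-dimensional channel vector by a per-coefficient scalar quantization, whose own window in $k$ again contributes only $O(1)$ active coefficients.

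The step I expect to be delicate is justifying that the active window is genuinely finite and $K$-independent, which rests on the strict decay $\mu'<\mu$ of the ratio governing the dependence on $|i-j|$: if one only had $\mu'=\mu$, the factor $(\mu'/\mu)^{2|i-j|}$ would be constant, the stream index $i$ would no longer window, and one would lose a factor of $K$ in each estimate (cubic instead of quadratic for \eqref{eq:ExpDec_5}, quadratic instead of linear for \eqref{eq:ExpDec_6}). I would therefore make explicit at the outset that $\mu'<\mu$, consistent with $\mu'\in(0,\mu]$ obtained from Proposition~\ref{proposition_DecayRate}, recall that $\mu'$ can be estimated from long-term channel statistics, and dispose of the edge transmitters by noting that their windows are truncations of the central one and hence only smaller.
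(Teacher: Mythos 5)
Your counting argument is correct and, in outline, it matches the paper's (very terse) proof: both rest on the observation that each TX's positive-bit window around the diagonal has a width controlled by $P$, $\mu$ and $\mu'$ but not by $K$, so that \eqref{eq:ExpDec_5} costs $O(K)$ bits per TX (the factor $K-1$ coming from the RVQ of a $K$-dimensional vector) while \eqref{eq:ExpDec_6} costs $O(1)$ bits per TX, giving the quadratic and linear totals respectively. The genuine difference is where the window comes from. You extract it directly from the displayed formulas, via the affine exponent $\log_2(P/\mu'^2)+2|i-j|\log_2(\mu'/\mu)$, which forces you to assume $\mu'<\mu$ strictly; as you rightly note, at $\mu'=\mu$ the ratio $(\mu'/\mu)^{2|i-j|}$ is identically one, \eqref{eq:ExpDec_5} assigns $(K-1)\log_2(P/\mu'^2)>0$ bits to \emph{every} channel vector, and the scalings degrade to cubic and quadratic. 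The paper instead leans on the construction inside the proof of Theorem~\ref{theorem_CSI_alloc_exp}: by the choice of the index set $\Omega$, only the channel vectors of the roughly $2K_0'$ users with $|i-j|<K_0'$, where $K_0'=\left\lceil -\log(P)/\log(\mu')\right\rceil$ is independent of $K$, receive any bits at all. This hard cutoff survives even when $\mu'=\mu$, but it is not visible in the stated formulas \eqref{eq:ExpDec_5} and \eqref{eq:ExpDec_6}, which is precisely the discrepancy your delicate-step discussion uncovers.

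Your caveat is therefore substantive rather than pedantic: since the paper itself conjectures $\mu'=\mu$ and uses that value in its simulations, the proposition read literally off \eqref{eq:ExpDec_5}--\eqref{eq:ExpDec_6} needs either your strict-decay hypothesis $\mu'<\mu$ or an explicit truncation of the allocation at $|i-j|\geq K_0'$ as in the theorem's proof; with the latter amendment your threshold $D$ is simply replaced by $K_0'$ and the rest of your argument goes through verbatim. The remaining details — centering on a middle TX, noting that edge TXs' budgets are truncations of the central one, bounding each active term by $(K-1)\log_2(P/\mu'^2)$, and windowing the column index $k$ through the factor $\mu^{2|k-j|}$ in \eqref{eq:ExpDec_6} — are all sound, and are in fact spelled out more carefully than in the paper's own two-sentence proof.
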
 
\begin{proof} 
The proof of the quadratic scaling of \eqref{eq:ExpDec_5} is exactly the same as for the Wyner model, and in fact is already proven in the proof of~\ref{theorem_CSI_alloc_exp} since only the CSI of $2K'_0$ users is shared with TX~$j$. The proof of the linear scaling of \eqref{eq:ExpDec_6} follows the same pattern as uses the fact that only a finite number of the channel vector~$\bm{h}_i$ are quantized with a positive number of bits.
\end{proof}
The restriction on the cooperation to a local scale is very interesting as it allows for the cooperation of very large number of TX/RXs since there is no cost induced: the cooperation remains on the cooperation scale.


\subsection{Broadcasted CSI}\label{se:ExpDec:BC}
The broadcasting of the CSI by the RXs instead of transmitting it only to its own TX, has appeared recently as an interesting practical scheme to transmit the CSI to the different cooperating TXs as a reduced cost. A fundamental question has been to determine whether the Broadcasting of the CSI leads somehow to an optimal CSI allocation optimized by "nature". Indeed, the intuition is that a TX located far away receives an estimate with low accuracy but also interfers little so that it does not require a very good accuracy. From the analysis of the Wyner model, we know that this is not alway the case. Indeed, a given channel estimate should also be shared to TXs with which this user has no direct link. This is not anymore the case if you remove the intermediate TX/RX pairs. Indeed, the two TXs have an impact on each other because of the interference emitted to intermediate TXs.

In the particular case considered, the "indirect links" are smaller than the direct links such that this leads to the property that broadcasting the CSI leads to achieve close to the maximal MG as giving in Theorem~\ref{theorem_CSI_alloc_Wyner}. Indeed, it can be easily seen by inspection that the rate whish can be obtained through the interfering links correspond to the accuracy required by Theorem~\ref{theorem_CSI_alloc_Wyner}. This is not further investigated for the moment, but we will show that this is confirmed by the simulations.

\subsection{Simulations for the Exponentially Decaying Channels}\label{se:ExpDec:Sim}

In this section, we provide simulations for an exponentially decaying channel defined according to \eqref{eq:ExpDec_1}. We consider a network made of $K=25$ cooperating TXs and we consider the same CSI allocation solutions as for the simulations in the Wyner model in Section~\ref{se:ExpDec:Sim}. For the \emph{Decaying CSI allocation}, we use the expression~\ref{eq:ExpDec_6}, and we also show an additional curve whish correponds to the broadcast of the CSI. Finally, we use $\mu'=\mu$ as an approximation of the true value of $\mu'$.

In Fig.~\ref{R_Linear_25TX_Musqrt02_VarPow}, we show the average rate per user achieved as a function of the SNR. We can observe how the conventional clustering perform badly, as does the uniform power allocation which would become efficient only at extremely high SNR. We can see how the decaying CSI allocation leads to practically no losse compared to the sum rate, while the broadcasting of the CSI also performs very well. note that due to the cooperation at only a local scale, only less than a percent of the total CSI is required by the decaying CSI allocation.

\begin{figure}
\centering
\includegraphics[width=1\columnwidth]{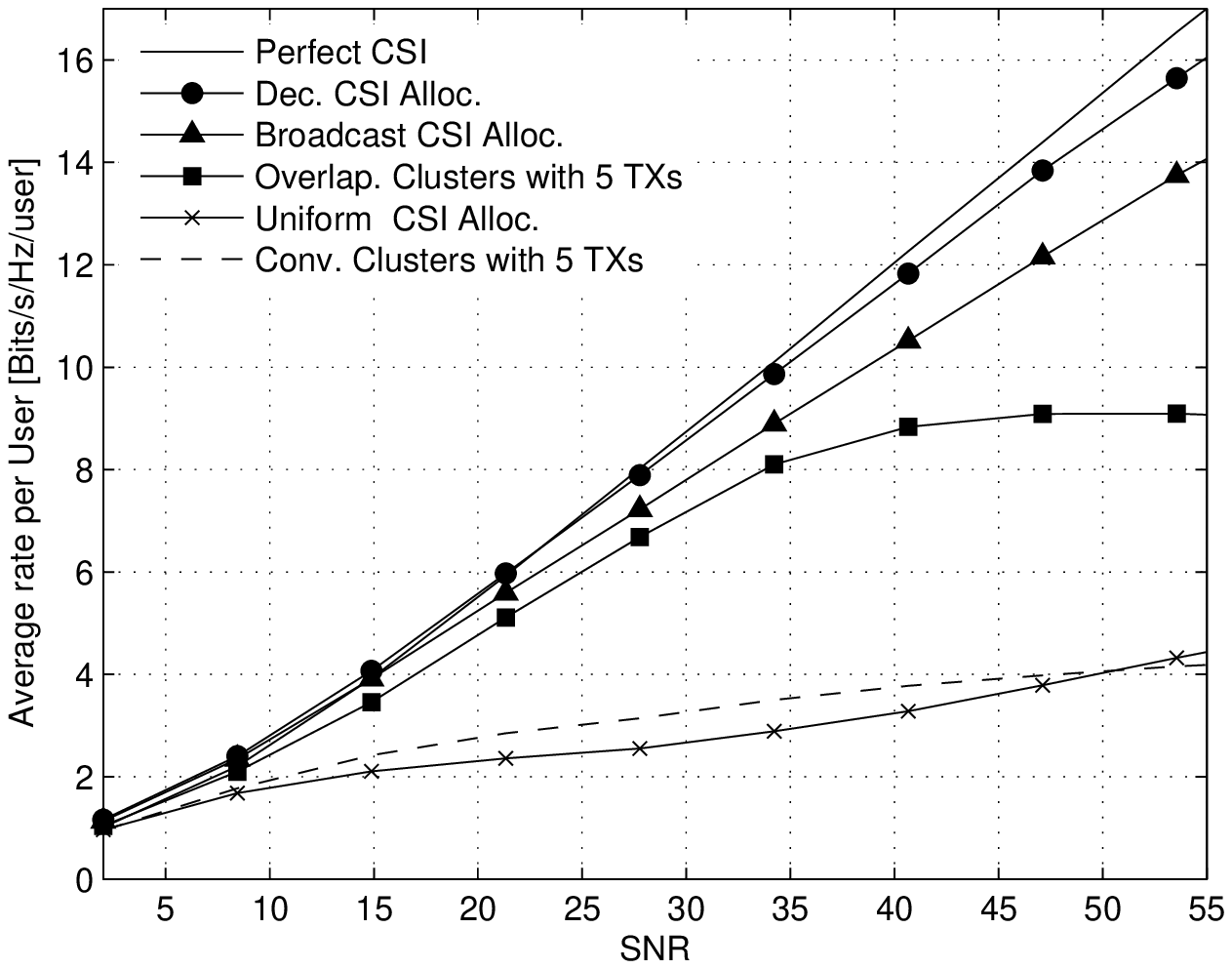}
\caption{Average rate per user in terms of the inter-cell attenuation factor $\mu$ for a normalized transmit power per TX equal to $20$ dB.}
\label{R_Linear_25TX_Musqrt02_VarPow}
\end{figure}
In Fig.~\ref{R_Linear_25TX_P30dB_VarMu}, the sum rate is shown in terms of the intercell attenuation coefficient. We can observe how the decaying CSI allocation performs as the ZF with perfect CSI while it is outperformed by the overlapping cluster allocation as the value of the coefficient increases.
\begin{figure}
\centering
\includegraphics[width=1\columnwidth]{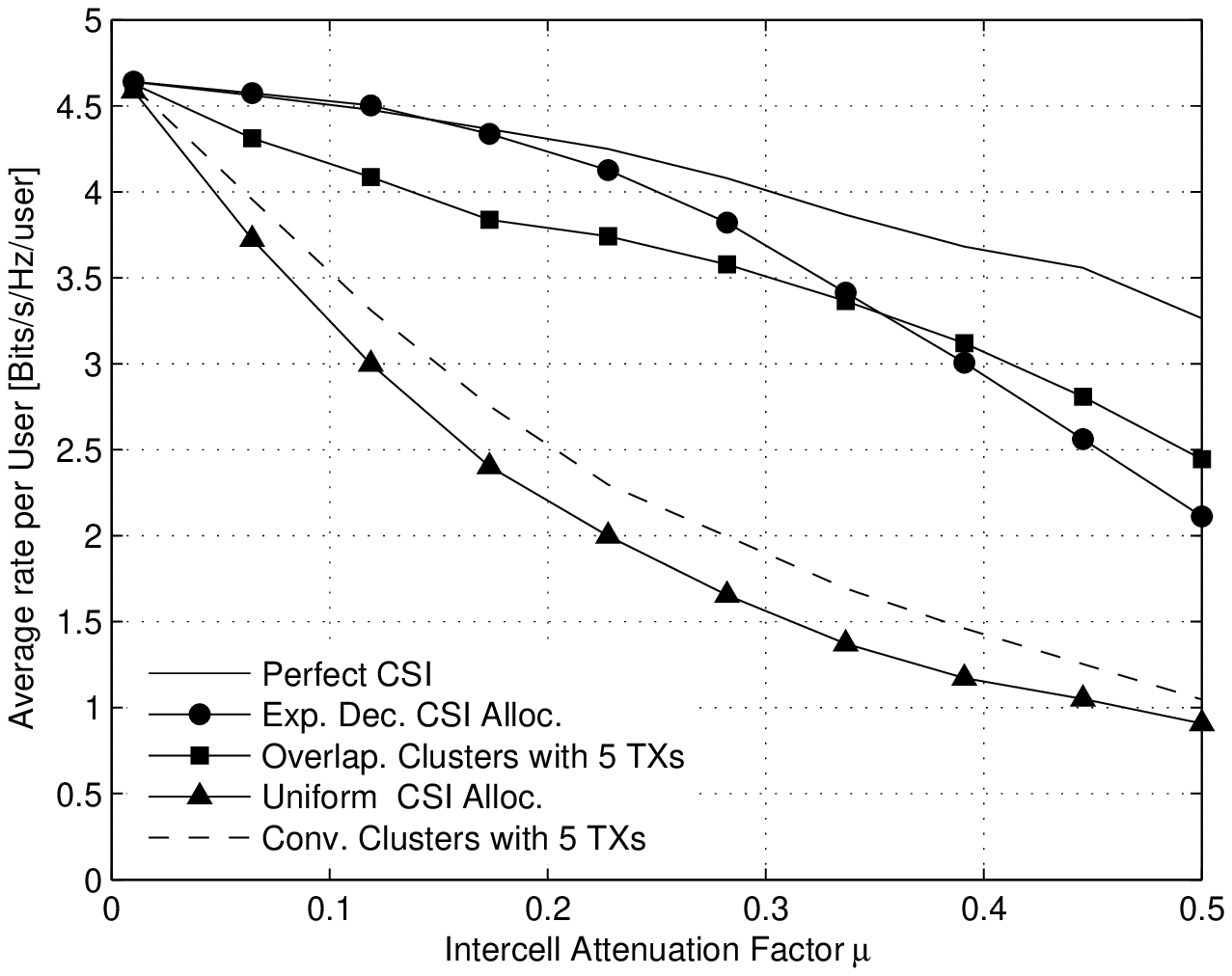}
\caption{Average rate per user in terms of the inter-cell attenuation factor $\mu$ for a normalized transmit power per TX equal to $20$ dB.}
\label{R_Linear_25TX_P30dB_VarMu}
\end{figure}
Finally, in Fig.~\ref{Amp_Linear_25TX_Musqrt02}, we show the logarithm of the amplitude of the precoding coefficient for a given stream normalized by the power used. We also show the logarithm of the amplitude of the channel coefficients with the indices in abscisse ordered so as to correspond to decreasing channel coefficients. We observe that the precoder coefficient are then also decreasing and that the decreasing is also exponential.
\begin{figure}
\centering
\includegraphics[width=1\columnwidth]{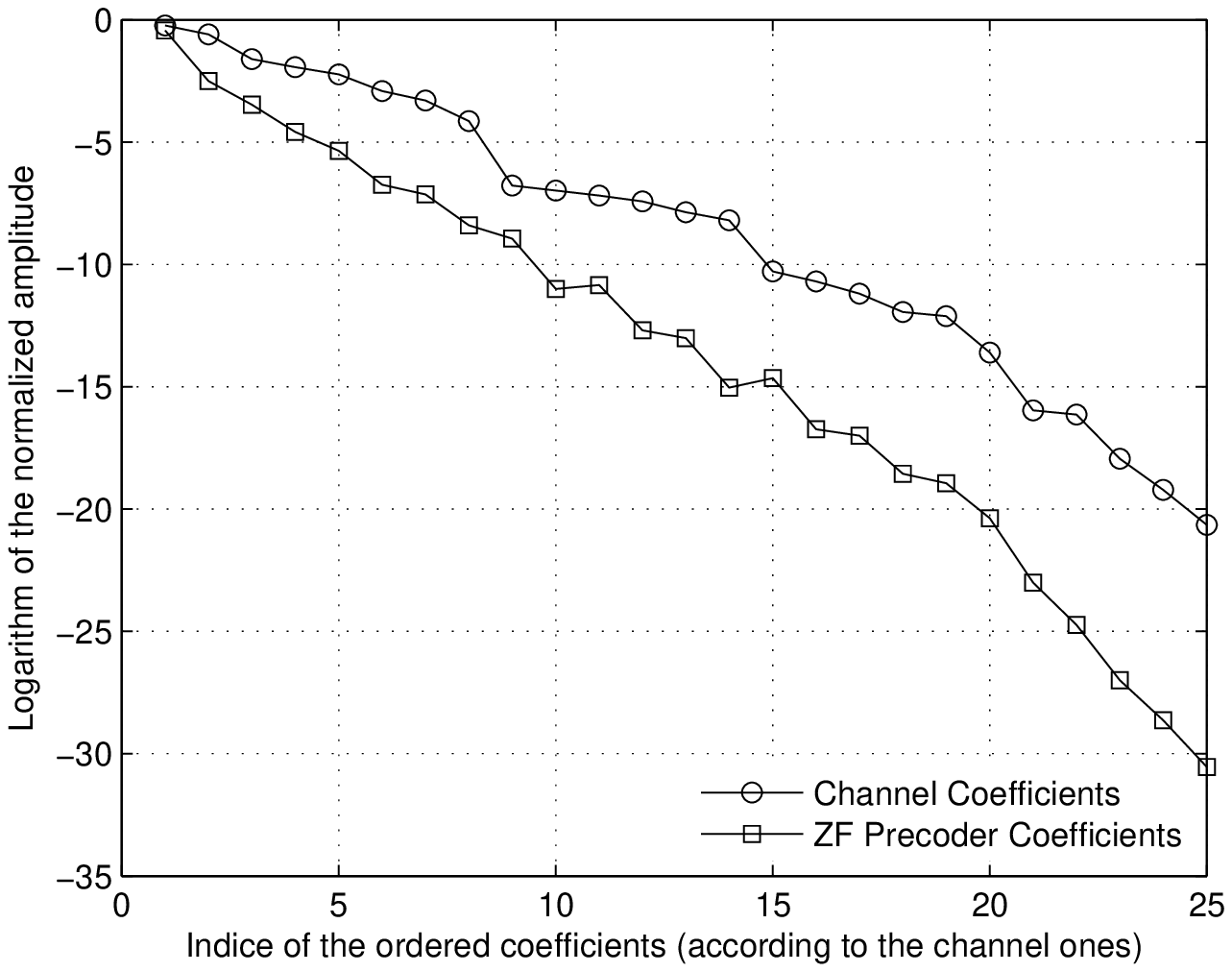}
\caption{Average rate per user in terms of the inter-cell attenuation factor $\mu$ for a normalized transmit power per TX equal to $20$ dB.}
\label{Amp_Linear_25TX_Musqrt02}
\end{figure}

%
%
%
%
%
%
%

\section{Conclusion}
In this work, we have studied the optimization of the CSI allocation in the Distributed CSI-MIMO channel. For the Wyner model, we have derived a CSI allocation achieving good performance with a large reduction in the amount of CSI feedback needed. This is especially true for large networks as the scaling of the total number of CSI feedback bits is reduced so that the number of bits allocated to a TX does not increase with the number of cooperating TXs. This approach has then been extended to the conventional Rayleigh fading setting if the channel is exponentially decaying. By simulations, we have confirmed the strong potential of the proposed approach where the CSI allocation is strongly reduced (up to only a few percents of the original CSI allocation are needed) at the cost of reduced performance losses. Studying further the application of this scheme to more practical transmission scheme is a very intereting direction of research. Particularly, we have shown that broadcast the CSI leeds to a achieve close to the CSI allocation derived.
\bibliographystyle{IEEEtran}
\bibliography{Literatur}
\end{document}